\newtheorem{theorem}{Theorem}
\newtheorem{lemma}[theorem]{Lemma}
\newtheorem{remark}[theorem]{Remark}
\newtheorem{example}[theorem]{Example}
\newtheorem{scheme}{Scheme}
\newtheorem{testcase}{Benchmark test}
\begin{document}

\title[Commutator-Based Operator Splitting for Linear Port-Hamiltonian Systems]{Commutator-Based Operator Splitting\\ for Linear Port-Hamiltonian Systems}

\author[M. Mönch et al.]{Marius Mönch$^{1,\star}$ \and Nicole Marheineke$^1$}

\date{\today\\
$^1$ Trier University, Arbeitsgruppe Modellierung und Numerik, Universit\"atsring 15, D-54296 Trier, Germany\\
$^\star$ corresponding author, email: moench@uni-trier.de, orcid: 0009-0000-2582-2199
}

\begin{abstract}
In this paper, we develop high-order splitting methods for linear port-Hamiltonian systems, focusing on preserving their intrinsic structure, particularly the dissipation inequality. Port-Hamiltonian systems are characterized by their ability to describe energy-conserving and dissipative processes, which is essential for the accurate simulation of physical systems. For autonomous systems, we introduce an energy-associated decomposition that exploits the system’s energy properties. We present splitting schemes up to order six. In the non-autonomous case, a port-based splitting is employed. This special technique makes it possible to set up methods of arbitrary even order. Both splitting approaches are based on the properties of the commutator and ensure that the numerical schemes not only preserve the structure of the system but also faithfully fulfill the dissipation inequality. The proposed approaches are validated through theoretical analysis and numerical experiments.
\end{abstract}

\keywords{Operator splitting schemes, port-Hamiltonian systems, dissipation inequality, commutator-based methods}

\maketitle

\section{Introduction}
Finite-dimensional port-Hamiltonian (pH) systems typically arise from port-based network modeling with coupled subsystems comprising different multi-physical domains and/or time scales or from mimetic space
discretizations of partial differential systems. The systems are passive, and physical properties, such as energy conservation or dissipation, are encoded in the algebraic structures, cf., e.g., \cite{van2014port,van2004port,mehrmann2023control}.
The focus of this work is on linear pH-systems of the form
\begin{align*}
    \dot{x}(t) &= (J - R)\nabla H(x(t)) + Bu(t), \qquad x(t_0) = x_0,\\
    y(t) &= B^T \nabla H(x(t))
\end{align*}
with structure matrix $J=-J^T$,  dissipation matrix $R=R^T\geq0$ and port matrix $B$.
The Hamiltonian $H$ is assumed to be quadratic, i.e., $H(x)=\tfrac{1}{2}x^TQx$ with $Q=Q^T>0$ (w.l.o.g.\ we consider $Q=I$ identity). Supposing sufficient regularity of the input $u$, the pH-system has a unique solution $x\in \mathcal{C}^1([0,T],\mathbb{R}^n)$ which satisfies the dissipation inequality,
\begin{align*}
    \tfrac{\mathrm{d}}{\mathrm{d}t}H(x(t))\leq y(t) ^Tu(t).
\end{align*}
Energy conservation particularly holds in the case of $u\equiv 0$ and $R=0$, i.e., $H(x(t))=H(x_0)$.

This work aims at the development and investigation of higher order splitting methods that account for the port-Hamiltonian dissipation inequality. Operator splitting takes advantageous of the decomposition of the underlying problem into subproblems of profoundly different behavior regarding, e.g., dynamics, stiffness or computational costs. In a splitting scheme, the subproblems are then solved sequentially and repeatedly with different step sizes,
\begin{align*}
\dot x&=f(x)=f^{[1]}(x)+f^{[2]}(x),\\
\Phi_h &= \varphi_{b_{m}h}^{[2]} \circ \varphi_{a_{m}h}^{[1]} \circ \varphi_{b_{m-1}h}^{[2]} \circ \cdots \circ \varphi_{a_{2}h}^{[1]} \circ \varphi_{b_{1}h}^{[2]} \circ \varphi_{a_{1}h}^{[1]},\\
& \quad \,
\varphi_{h}^{[i]}=\exp(h\mathcal{L}^{[i]})\mathrm{id}, \quad  \mathcal{L}^{[i]}=\langle f^{[i]}(\cdot), \nabla_x\rangle
\end{align*}
satisfying $\sum_{j=1}^ma_j=\sum_{j=1}^mb_j=1$, cf., e.g., \cite{hairer2006structure,mclachlan2002splitting}. For general $f^{[1]}$ and $f^{[2]}$ classical splitting schemes of order $p\geq 3$ involve negative step sizes (at least one
coefficient $a_j$ and at least one coefficient $b_j$ are strictly negative), \cite{blanes2005necessity,celledoni2018passivity}. For time-irreversible systems, such as pH-systems with $R\neq 0$, the positivity of the step sizes is essential. There exist splitting schemes with complex-valued coefficients with strictly positive real parts that allow for higher order $p\geq3$, cf.\ \cite{blanes2010splittingcomplex,castella2009splitting}.
Alternatively, negative step sizes can be avoided by help of commutator-based methods, \cite{kieri2015stiff}. For splitting methods of fourth order with positive step sizes, see, e.g., the force gradient approaches in \cite{omelyan2002construction}, the force gradient symplectic integrators in \cite{chin2001fourth,chin2000higher} or the force gradient nested multirate methods in \cite{shcherbakov2015force,shcherbakov2017adapted}. In these papers the focus is on Hamiltonian systems. The Hamiltonian is split, resulting in a commutator with a special gradient structure which allows the preservation of energy or the preservation of a shadow Hamiltonian.
In the splitting for pH-systems we face  an additional challenge with the dissipation inequality. The preservation of dissipation requires either force gradient terms with specific properties or vanishing commutators due to appropriately chosen decompositions. The choice of the decomposition is problem-dependent. So far, energy-associated and subsystem (dimension-reducing) decompositions have been investigated for pH-systems of ordinary and differential algebraic equations in a second order Strang splitting \cite{bartel2024,frommer2023operator}.

In this work we develop fourth and sixth-order energy- and dissipation-preserving symmetric methods for linear autonomous pH-systems based on an energy-associated decomposition. The energy-associated decomposition (also referred to as $J$-$R$-decomposition in \cite{bartel2024}) considers an energy-conserving subproblem and a dissipative subproblem and yields commutators with special {(skew-)}symmetry properties. For sixth order, we particularly exploit linear combinations and order-preserving approximations of the commutators, using the Baker-Campbell-Hausdorff formula \cite{hairer2006structure} and the structure of positive partitions of exponential operators \cite{chin2005structure}. The intuitive extension of the energy-associated decomposition to an non-autonomous (port) setting
---by incorporating the time-dependent input into the dissipative subproblem--- allows the preservation
of dissipation in case of Strang splitting ($p = 2$), cf. \cite{frommer2023operator}, but not in higher order. The occurring
commutators are differential operators that lose the port-Hamiltonian structural
properties. We address this problem by introducing a novel port-based decomposition. The resulting splitting strategy has similarities with the "frozen time" technique discussed in \cite{blanes2006splitting,blanes2010splitting,blanes2012time}. We obtain splitting schemes of fourth, sixth and higher even order. The dissipation inequality is accounted for in a quadrature-type version. In addition, we extend
the results from the energy-associated splitting of the autonomous case to the non-autonomous case
by adapting the formula for the variation of constants, and derive a fourth-order method. A similar procedure has been applied for
methods of first and second order in \cite{faou2015analysis,ostermann2012error}.

The paper is structured as follows: Section~\ref{sec:splitting} gives a brief introduction into the concept of splitting and its convergence theory. Fourth and sixth-order structure-preserving schemes are derived for autonomous pH-systems based on an energy-associated decomposition in Section~\ref{sec:aut}. The extension to non-autonomous pH-systems is discussed in Section~\ref{sec:nonaut}. We develop port-based splitting methods of fourth and higher even order as well as a fourth-order scheme that combines an energy-associated splitting with an quadrature rule. The numerical performance of the schemes regarding convergence order and dissipation inequality is discussed in Section~\ref{sec:numerics}. 

\section{Splitting Concept}\label{sec:splitting}
The splitting concept considers a decomposition of an autonomous finite-dimensional differential system into subproblems, cf., e.g., \cite{hairer2006structure,mclachlan1995numerical,mclachlan2002splitting}. Assuming sufficient regularity of the right-hand side, the exact flow $\varphi$ of the original system as well as the flows $\varphi^{[i]}$ of the subproblems $f^{[i]}$ can be expressed  in terms of Lie derivatives $\mathcal{L}^{[i]}$, $i=1,2$, i.e.,
\begin{align*}
\dot{x}&=f(x)=f^{[1]}(x)+f^{[2]}(x), \qquad x(0)=x_0,\\
\varphi_h(x_0)&= \exp(h(\mathcal{L}^{[1]}+\mathcal{L}^{[2]}))\, \mathrm{id}(x_0), \qquad \varphi_h^{[i]}= \exp(h \mathcal{L}^{[i]})\,\mathrm{id}, \qquad \mathcal{L}^{[i]}= \langle f^{[i]}(\cdot), \nabla_x\rangle
\end{align*}
with  $\mathrm{id}$ identity mapping. A general splitting scheme 
\begin{align*}
\Phi_h = \varphi_{b_{m}h}^{[2]} \circ \varphi_{a_{m}h}^{[1]} \circ \varphi_{b_{m-1}h}^{[2]} \circ \cdots \circ \varphi_{a_{2}h}^{[1]} \circ \varphi_{b_{1}h}^{[2]} \circ \varphi_{a_{1}h}^{[1]}
\end{align*}
of $2m$ stages with step sizes $a_j$, $b_j$, $j=1,...,m$ consists of expressions  $\varphi_{b_{j}h}^{[2]} \circ \varphi_{a_{j}h}^{[1]} $  which can be written as an exponential in the form of
\begin{align*}
       \varphi_{b_jh}^{[2]} \circ \varphi_{a_jh}^{[1]} &=  \exp(a_jh \mathcal{L}^{[1]})\exp(b_jh \mathcal{L}^{[2]} )\,\mathrm{id}\\
       &= \exp(\,h (a_j\mathcal{L}^{[1]}+b_j \mathcal{L}^{[2]}) 
      + h^2 \tfrac{1}{2} a_jb_j [\mathcal{L}^{[1]}, \mathcal{L}^{[2]}]\\
       &\qquad \quad  +h^3 \tfrac{1}{12} (a_j^2 b_j [\mathcal{L}^{[1]},\mathcal{L}^{[1]}, \mathcal{L}^{[2]}]]-a_j b_j^2 [\mathcal{L}^{[2]},\mathcal{L}^{[1]}, \mathcal{L}^{[2]}]])+\mathcal{O}(h^4)\,)\,\mathrm{id}
    \end{align*}
according to the Lemma by Gröbner and the Baker-Campbell-Hausdorff formula, cf.\ \cite{hairer2006structure}.
The Lie bracket prescribes a linear differential operator, i.e., $[\mathcal{L}^{[1]}, \mathcal{L}^{[2]}]=\mathcal{L}^{[1]} \mathcal{L}^{[2]}-\mathcal{L}^{[2]} \mathcal{L}^{[1]}$. Hence, a splitting method can be considered as an exponential of differential operators (commutators), it is consistent if the chosen step sizes satisfy $\sum_j a_j=\sum_j b_j=1$.

We aim for schemes of high order $p\gg1$, i.e., $\Phi_h= \exp(h(\mathcal{L}^{[1]}+\mathcal{L}^{[2]})+\mathcal{O}(h^{p+1}))\,\mathrm{id}$, that account for the port-Hamiltonian dissipation inequality. Our focus is on symmetric schemes, where the step sizes obey either case a) or case b) --- depending on which subproblem is solved first, i.e.,
\begin{align}\label{scheme:sym}
 \Phi_h = \bigg(\prod_{j=1}^m  \exp(a_jh \mathcal{L}^{[1]})\exp(b_jh \mathcal{L}^{[2]} )\bigg)\,\mathrm{id}, \qquad s=\lceil \tfrac{m}{2}\rceil
 \end{align}
  \begin{itemize}
 \item[a)] \hspace*{1.1cm} $a_i=a_{m+1-i}$, \, $i=1,...,s$, \hspace*{1cm} $b_m=0$, $b_i=b_{m-i}$,   \hspace*{0.25cm} \, $i=1,...,s-1$, 
  \item[b)] $a_1=0$, $a_i=a_{m+2-i}$, \, $i=2,...,s$, \hspace*{2.3cm} $b_i=b_{m+1-i}$,  \, $i=1,...,s$. 
 \end{itemize}
 Symmetric schemes posses an even order and can be formally written in the form
\begin{align}\label{scheme:sym_order}
       \Phi_h&=\exp(\,h\, (e_{1_1}\mathcal{L}^{[1]}+e_{1_2} \mathcal{L}^{[2]}) +h^3\, (e_{3_1} \mathcal{C} + e_{3_2} \mathcal{D})\\  \nonumber
       & \quad \qquad +
       h^5\, (\,\, e_{5_1} [\mathcal{L}^{[1]},[\mathcal{L}^{[1]},\mathcal{C}]]+e_{5_2} [\mathcal{L}^{[1]},\mathcal{L}^{[1]},\mathcal{D}]] +
       e_{5_3} [\mathcal{L}^{[1]},[\mathcal{L}^{[2]},\mathcal{D}]]
       \\ 
       & \quad \qquad \qquad +e_{5_4} [\mathcal{L}^{[2]},[\mathcal{L}^{[1]},\mathcal{C}]] +
       e_{5_5}  [\mathcal{L}^{[2]},[\mathcal{L}^{[2]},\,\mathcal{C}]]+e_{5_6} [\mathcal{L}^{[2]},[\mathcal{L}^{[2]},\mathcal{D}]]\,\,) + \mathcal{O}(h^7)\,)\, \mathrm{id} \nonumber
\end{align}
with the commutators 
\begin{align*}
\mathcal{C}=[\mathcal{L}^{[1]},[\mathcal{L}^{[1]},\mathcal{L}^{[2]}]], \qquad \mathcal{D}=[\mathcal{L}^{[2]},[\mathcal{L}^{[1]},\mathcal{L}^{[2]}]].
\end{align*}
The relation $[\mathcal{L}^{[1]}, \mathcal{D}]= -[\mathcal{L}^{[2]}, \mathcal{C}]$ holds because of the Jacobi identity, i.e., $[\mathcal{L}^{[A]},[\mathcal{L}^{[B]},\mathcal{L}^{[C]}]]+[\mathcal{L}^{[B]},[\mathcal{L}^{[C]},\mathcal{L}^{[A]}]]+[\mathcal{L}^{[C]},[\mathcal{L}^{[A]},\mathcal{L}^{[B]}]]=0$. The coefficients $e_{k_\ell}$ depend on the step size tuples $a$ and $b$, in particular $e_{1_1}=\sum_j a_j$ and $e_{1_2}=\sum_j b_j$.  To fulfill the conditions for order $p$, i.e., 
\begin{align}\label{scheme:order}
e_{1_1}=e_{1_2}=1 \, \, \text{ and } \,\, e_{k_\ell}=0  \text{ for } k=3,5,...,p-1 \text{ and all } \ell,
\end{align} 
a symmetric scheme has $m$ degrees of freedom in $(2m-1)$ stages. 

\begin{example}[Strang splitting] The Strang splitting is certainly the best-known second-order method, as it has the smallest number of stages (degrees of freedom, $m=2$), i.e.,
 \begin{align*}
 \text{a) } &a_1=\tfrac{1}{2}, \quad b_1=1,  \qquad  \qquad \qquad \qquad \text{b) } a_2=1, \quad b_1=\tfrac{1}{2}, \\
 &\Rightarrow \, \Phi_h = \varphi_{h/2}^{[\alpha]} \circ \varphi_{h}^{[\beta]} \circ \varphi_{h/2}^{[\alpha]}, \qquad \alpha,\beta \in\{1,2\}, \,\, \alpha\neq \beta,
 \end{align*}
cf.\ \eqref{scheme:sym}. It preserves the dissipation inequality for general non-linear port-Hamiltonian ordinary differential systems, including a general Hamiltonian, state-dependent system matrices and time-dependent input, when energy-associated or subsystem-/port-based decompositions are used, \cite{frommer2023operator}.
\end{example}

Splitting schemes of order $p\geq 3$ for general $f^{[1]}$ and $f^{[2]}$ involve negative step sizes (at least one coefficient $a_j$ and at least one coefficient $b_j$ are strictly negative, cf., e.g., \cite{blanes2005necessity, celledoni2018passivity}), by which the port-Hamiltonian dissipation inequality is violated. As an example, we refer to the fourth-order triple-jump method ($m=4$) given, for $\alpha,\beta \in\{1,2\} $, $\alpha\neq \beta$, by
\begin{align*}
\Phi_h = \varphi_{\tfrac{\gamma_1}{2}h}^{[\alpha]} \circ \varphi_{\gamma_1h}^{[\beta]} \circ \varphi_{\tfrac{\gamma_1+\gamma_2}{2}h}^{[\alpha]} \circ \varphi_{\gamma_2 h}^{[\beta]} \circ \varphi_{\tfrac{\gamma_1+\gamma_2}{2}h}^{[\alpha]} \circ \varphi_{\gamma_1h}^{[\beta]}  \circ \varphi_{\tfrac{\gamma_1}{2}h}^{[\alpha]}, \quad \gamma_1=\tfrac{1}{2-\sqrt[3]{2}}>0, \,\gamma_2=\tfrac{-\sqrt[3]{2}}{2-\sqrt[3]{2}}<0.
\end{align*}

For time-irreversible systems, such as pH-systems, the positivity of the step sizes is essential. There exist splitting schemes  with complex-valued coefficients with strictly positive real parts that allow for higher order $p\geq3$, cf.\ \cite{blanes2010splittingcomplex,castella2009splitting}.
In this work, we demand for real-valued positive step sizes. To set up high-order dissipation-preserving schemes for port-Hamiltonian systems, we explore the commutators. We follow the strategy that we satisfy as many order conditions as possible by the choice of positive step sizes and cancel the remaining terms either by vanishing commutators due to appropriately chosen decompositions or by forced gradient terms with specific properties.

\section{Autonomous PH-Systems with Energy-associated Decomposition}\label{sec:aut}
In this section we consider linear autonomous port-Hamiltonian systems with an energy-associated decomposition into a dissipative and an energy-conserving subproblem, used, e.g., in \cite{bartel2024,frommer2023operator}, i.e., 
\begin{align}\label{model:autonom} 
    &\dot{x} = (J - R)x=f^{[1]}(x) +f^{[2]}(x) , && x(0)=x_0,\\
    &f^{[1]}(x) =\bar{R}x, \,\, \bar{R}=-R\leq0, \qquad   f^{[2]}(x) =Jx, \,\, J=-J^T. && \nonumber
\end{align}
The pH-system has a quadratic Hamiltonian $H(x)=\frac{1}{2}x^Tx$, and the dissipation inequality is satisfied by its unique solution, i.e. $\tfrac{\mathrm{d}}{\mathrm{d}t}H(x(t))\leq 0$.
The exact flow of the original system is in particular given by $\varphi_h(x_0)=\exp(h(\bar R+J))x_0$. Analogously, we have $\varphi_h^{[1]}=\exp(h\bar R)$ and $\varphi_h^{[2]}=\exp(h J)$ for the subproblems. 

For the derivation of splitting schemes, the Baker-Campbell-Hausdorff formula on the composition of exponentials is applicable and the order conditions are the same as in the general (nonlinear) case \eqref{scheme:order}. In the considered linear setting, however, the system matrices $L^{[1]} =\bar{R}$ and $L^{[2]} =J$ that are in general non-commuting, take formally the place of the Lie derivatives in \eqref{scheme:sym_order}. Hence, the commutators are here not differential operators, but matrices. To distinguish  between differential operators and matrices in the notation, we use a calligraphic and a normal font.

\begin{lemma} \label{lem:com} Let $L^{[1]} $ be a symmetric matrix and  $L^{[2]}$ a skew-symmetric matrix. Then a commutator in \eqref{scheme:sym_order} is skew-symmetric if it contains an odd number of $L^{[2]}$, otherwise it is symmetric. In particular, $C=[{L}^{[1]},[{L}^{[1]},{L}^{[2]}]] $ is skew-symmetric and ${D}=[{L}^{[2]},[{L}^{[1]},{L}^{[2]}]]$ is symmetric.
\end{lemma}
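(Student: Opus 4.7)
The plan is to reduce the claim to a single elementary transpose identity and then run a short induction on nesting depth. First I would verify the identity
\[
[A,B]^T = -\sigma_A\,\sigma_B\,[A,B]
\]
for matrices $A,B$ of definite symmetry type $A^T = \sigma_A A$, $B^T = \sigma_B B$ with $\sigma_A,\sigma_B\in\{\pm 1\}$. This is immediate from $(AB)^T = B^T A^T = \sigma_A\sigma_B\, BA$, so the commutator of two matrices of definite type again has a definite symmetry type.

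Assigning $\sigma(L^{[1]}) = +1$ and $\sigma(L^{[2]}) = -1$, I would then show by induction on the depth $n$ of a nested commutator
\[
K = [X_1,[X_2,\ldots,[X_n,X_{n+1}]\ldots]], \qquad X_i \in \{L^{[1]},L^{[2]}\},
\]
that $K$ has definite symmetry type
\[
\sigma(K) = (-1)^n\prod_{i=1}^{n+1}\sigma(X_i).
\]
The base case $n=1$ is exactly the transpose identity above; in the inductive step, the inner nested commutator $[X_2,\ldots,[X_n,X_{n+1}]\ldots]$ has a definite type by the inductive hypothesis, so applying the transpose identity to the outer bracket yields the claim.

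Finally, the commutators appearing in \eqref{scheme:sym_order} originate from the Baker-Campbell-Hausdorff expansion of a symmetric composition and therefore always consist of an odd total number of factors $L^{[i]}$, i.e., have even depth $n$. Consequently $(-1)^n = 1$ and $\sigma(K) = (-1)^k$, where $k$ counts the occurrences of $L^{[2]}$ in $K$. This gives the parity rule of the lemma, and specializing to $C$ (one $L^{[2]}$, hence skew-symmetric) and $D$ (two $L^{[2]}$ factors, hence symmetric) yields the explicit statements. I do not expect a real obstacle here; the only point requiring care is the bookkeeping of signs through the nested transposes, which the induction handles cleanly.
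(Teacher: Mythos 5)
Your proof is correct and follows essentially the same route as the paper's: both reduce the claim to the elementary fact that transposing a commutator of matrices of definite symmetry type reverses the order of the factors and multiplies by the product of their signs. The paper carries out this computation directly on the doubly nested bracket $[X,[Y,Z]]$ (treating inner commutators such as $C$ and $D$ as (skew-)symmetric matrices in their own right), whereas you organize the identical sign bookkeeping as an induction on nesting depth via $\sigma(K)=(-1)^n\prod_i\sigma(X_i)$, together with the correct observation that the commutators in \eqref{scheme:sym_order} have an odd number of factors so that $(-1)^n=1$.
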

\begin{proof}
Let $X$, $Y$ and $Z$ be symmetric or skew-symmetric matrices. Consider the commutator $K=[X,[Y,Z]]$ and let $k$ be the number of involved skew-symmetric matrices in $K$. From
$$K=XYZ-XZY-YZX+ZYX$$
we straightforward obtain
\begin{align*}
K^T&= Z^T Y^T X^T -Y^T Z^T X^T-X^T Z^T Y^T+X^T Y^T Z^T\\
      &= (-1)^k (ZYX-YZX-XZY+XYZ)=(-1)^k K.
\end{align*}
\end{proof}

\begin{remark}
For general port-Hamilto\-nian system matrices $\bar R\leq 0$ and $J=-J^T$, the symmetric commutators occurring in \eqref{scheme:sym_order} are non-definite. 
\end{remark}

\subsection{Fourth-order dissipation-preserving schemes}
In view of the dissipation preservation, we demand for positive step sizes and hence cannot fulfill all fourth order conditions. In particular, the terms $e_{3_1}$ and $e_{3_2}$ cannot be canceled out simultaneously, cf.\ \cite{chin2005structure,suzuki1991general}. To derive a fourth-order scheme, we make use of the skew-symmetry of the commutator $C=[\bar R, [\bar R, J]]$ and treat it as a forced gradient term.

We fulfill the following order conditions by choosing appropriate step size tuples $a$ and $b$,
$$e_{1_1}(a,b)=e_{1_2}(a,b)=1, \qquad e_{3_2}(a,b)=0.$$
A symmetric scheme which provides enough degrees of freedom ($m=3$) has a stage number of at least five. The satisfaction of the order conditions involves the solve of a non-linear system. Then, we evaluate the remaining coefficient $c=e_{3_1}(a,b)$ and incorporate the term $(-ch^3C)$ into the scheme. The incorporation can be performed in different ways, but must account for the symmetric structure of the scheme. Formally, we obtain two classes of symmetric schemes with $m\geq 3$  in accordance to the cases a) and b) in \eqref{scheme:sym}
\begin{align}\label{scheme:sym-force}
 \Phi_h = \prod_{j=1}^m  \exp(a_jh L^{[1]}-c_j^{[1]} h^3C)\,\exp(b_jh {L}^{[2]}-c_j^{[2]}h^3C),  
 \qquad \sum_{j=1}^m c_j^{[1]}+c_j^{[2]}= c, 
 \end{align}
 \begin{itemize}
 \item[a)] \hspace*{1.25cm} $c_i^{[1]}=c_{m+1-i}^{[1]}$, \,$i=1,...,s$ \, as $a_i$, \hspace*{0.55cm} $c_m^{[2]}=0$, $c_i^{[2]}=c_{m-i}^{[2]}$, \hspace*{0.20cm} \,$i=1,...,s-1$ \, as $b_i$, 
  \item[b)] $c_1^{[1]}=0$,  $c_i^{[1]}=c_{m+2-i}^{[1]}$, \,$i=2,...,s$ \, as $a_i$, \hspace*{1.9cm} $c_i^{[2]}=c_{m+1-i}^{[2]}$,  \,$i=1,...,s$ \, as $b_i$, 
 \end{itemize}
for all $a_j, b_j \geq 0$ and $s=\lceil \tfrac{m}{2}\rceil$.

\begin{lemma} A commutator-based scheme of the form \eqref{scheme:sym-force} preserves the dissipation inequality of \eqref{model:autonom}.
\end{lemma}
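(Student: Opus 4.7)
The plan is to reduce the discrete dissipation inequality to the statement that each single exponential factor appearing in \eqref{scheme:sym-force} is non-expansive in the Euclidean norm, and then exploit the (skew-)symmetry properties from Lemma~\ref{lem:com}. Since the Hamiltonian is $H(x) = \tfrac{1}{2}\|x\|^2$, it suffices to show $\|\Phi_h x_0\| \leq \|x_0\|$ for every initial state $x_0$ and every $h\geq 0$.

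First I would record the elementary norm estimate: if $M\in\mathbb{R}^{n\times n}$ has symmetric part $M_s = \tfrac{1}{2}(M+M^T) \leq 0$, then $\|\exp(M)\|_2 \leq 1$. This is immediate from the energy identity $\tfrac{\mathrm{d}}{\mathrm{d}t}\|e^{tM}x\|^2 = 2\,(e^{tM}x)^T M_s\, (e^{tM}x) \leq 0$ for $t\geq 0$, evaluated at $t=1$. In particular, if $M$ is skew-symmetric, $\exp(M)$ is orthogonal and an isometry.

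Next I would inspect the two types of factors in \eqref{scheme:sym-force}. By Lemma~\ref{lem:com}, the matrix $C=[L^{[1]},[L^{[1]},L^{[2]}]]$ is skew-symmetric because it contains an odd number of $L^{[2]}=J$. Hence:
\begin{itemize}
\item For the factor $\exp(a_j h L^{[1]} - c_j^{[1]} h^3 C)$, the exponent decomposes into the symmetric part $a_j h L^{[1]} = -a_j h\, R$ and the skew-symmetric part $-c_j^{[1]} h^3 C$. Since $a_j\geq 0$, $h\geq 0$ and $R\geq 0$, the symmetric part is negative semi-definite, so this factor is non-expansive.
\item For the factor $\exp(b_j h L^{[2]} - c_j^{[2]} h^3 C)$, both $L^{[2]}=J$ and $C$ are skew-symmetric, so the exponent is skew-symmetric and the factor is an orthogonal transformation, hence norm-preserving.
\end{itemize}

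Finally I would conclude by composition: writing $\Phi_h = F_{2m}\circ F_{2m-1}\circ\cdots\circ F_1$ as the product of the $2m$ factors above, each $F_k$ satisfies $\|F_k y\|\leq \|y\|$, so
\begin{equation*}
H(\Phi_h x_0) = \tfrac{1}{2}\|\Phi_h x_0\|^2 \leq \tfrac{1}{2}\|x_0\|^2 = H(x_0),
\end{equation*}
which is the discrete analogue of $\tfrac{\mathrm{d}}{\mathrm{d}t}H(x(t))\leq 0$ for the autonomous system~\eqref{model:autonom}. No genuine obstacle arises; the only point requiring care is the correct use of Lemma~\ref{lem:com} to certify that the force-gradient term $C$ contributes only to the skew-symmetric part of each exponent, so that the non-positivity of the symmetric part is inherited entirely from $L^{[1]}=\bar R \leq 0$ together with positivity of the step sizes $a_j,b_j$.
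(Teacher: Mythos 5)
Your proposal is correct and follows essentially the same route as the paper: both arguments decompose each exponent into a negative semi-definite symmetric part ($a_j h\bar R$) plus a skew-symmetric part ($b_j h J$ and the $C$-terms, certified by Lemma~\ref{lem:com}), and both derive the non-positivity of the energy increment factor-by-factor from the identity $\tfrac{\mathrm{d}}{\mathrm{d}t}H = x^T M_s x \leq 0$ together with $a_j, b_j \geq 0$. The only cosmetic difference is that you package this as non-expansiveness $\|\exp(M)\|_2 \leq 1$ of each factor and compose operator norms, whereas the paper telescopes the Hamiltonian differences along the $2m$ intermediate subflows.
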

\begin{proof}
Let $M^{[1]}_j=\bar R - \tfrac{c_j^{[1]}}{a_j}h^2 C$ and $M^{[2]}_j=J-\tfrac{c_j^{[2]}}{b_j}h^2 C$. A step of the splitting scheme is performed via the solves of $2m$ differential systems, i.e., $\Phi_h(x_0)=x_{m}^{[2]}(h_m^{[2]})$ with 
\begin{align*}
\dot x_1^{[1]}&=M^{[1]}_1 x_1^{[1]}, &&  x_1^{[1]}(0)=x_0,\\
\dot x_1^{[2]}&=M^{[2]}_1 x_1^{[2]}, &&   x_1^{[2]}(0)=x_1^{[1]}({h}_1^{[1]}),\\
& \, \vdots  && \hspace*{1.cm}\vdots \\
\dot x_{m}^{[1]}&=M^{[1]}_m x_{m}^{[1]}, &&  x_{m}^{[1]}(0)=x_{m-1}^{[2]}({h}_{m-1}^{[2]}),\\
\dot x_{m}^{[2]}&=M^{[2]}_m x_{m}^{[2]}, &&  x_{m}^{[2]}(0)=x_{m}^{[1]}({h}_m^{[1]}),
\end{align*} 
where $h_j^{[1]}=a_j h\geq 0$ and $h_j^{[2]}=b_j h\geq 0$.
For the quadratic Hamiltonian we obtain
\begin{align*}
H(x_{m}^{[2]}(h_m^{[2]}))-H(x_0)=
\sum_{j=1}^m \bigg(H(x_{j}^{[2]}(h_j^{[2]}))-H(x_{j}^{[2]}(0))\bigg)+
\bigg(H(x_{j}^{[1]}(h_j^{[1]}))-H(x_{j}^{[1]}(0))\bigg)\leq 0,
\end{align*}
since the estimates
\begin{align*}
H(x_{j}^{[1]}(h_j^{[1]}))-H(x_{j}^{[1]}(0))&=\int_0^{a_j h} \tfrac{\mathrm{d}}{\mathrm{d}t}H(x_j^{[1]}(t)) \, \mathrm{d}t=\int_0^{a_j h} (x_j^{[1]})^T \, (\bar R -\tfrac{c_j^{[1]}}{a_j}h^2 C) \, x_j^{[1]}\, \mathrm{d}t \leq 0,\\
H(x_{j}^{[2]}(h_j^{[2]}))-H(x_{j}^{[2]}(0))&=\int_0^{b_j h} \tfrac{\mathrm{d}}{\mathrm{d}t}H(x_j^{[2]}(t)) \, \mathrm{d}t=\int_0^{b_j h} (x_j^{[2]})^T \, (J -\tfrac{c_j^{[2]}}{b_j}h^2 C) \, x_j^{[2]} \, \mathrm{d}t = 0
\end{align*}
hold for all $j$ due to the negative definiteness of $\bar R$ and the skew-symmetry of $J$ and $C$.
\end{proof}

\begin{scheme}[5-stage schemes]\label{scheme:5stage} The two classes of dissipation-preserving 5-stage schemes for \eqref{model:autonom} are given by\\[0.7ex]
\begin{tabular}{l l l l l}
\qquad a) & $a_1=\tfrac{1}{6},$ & $a_2=\tfrac{2}{3},$ & $b_1=\tfrac{1}{2},$ & $c =e_{3_1}=\tfrac{1}{6}b_1(a_2^2-2a_1a_2-2a_1^2)=\tfrac{1}{72}$\\[0.5ex]
\qquad b) & $a_2=\tfrac{1}{2},$ & $b_1=\tfrac{3-\sqrt{3}}{6},$ & $b_2=\tfrac{\sqrt{3}}{3},$ & $c=e_{3_1}=-\tfrac{1}{6}a_2^2(b_2-4b_1)=\tfrac{2-\sqrt{3}}{24}$
\end{tabular}\\[0.7ex]
Class-a schemes start with a solve of subproblem $f^{[1]}$ (hence $b_m=0$), class-b schemes with a solve of subproblem $f^{[2]}$ (hence $a_1=0$). The remaining coefficients obey the symmetry conditions stated in \eqref{scheme:sym}.

The underlying non-linear systems for the order conditions of the two classes are \\[0.5ex]
\begin{tabular}{l l l l }
a) & $e_{1_1}= 2a_1+a_2=1,$ & $e_{1_2}=2b_1=1,$ & $e_{3_2}=\tfrac{1}{6}b_1^2(a_2-4a_1)=0$ \\[0.3ex]
b) & $e_{1_1}=2a_2=1,$ & $e_{1_2}=2b_1+b_2=1,$ & $e_{3_2}= -\tfrac{1}{6}a_2(b_2^2-2b_1b_2-2b_1^2)=0$
\end{tabular}
\end{scheme}

\begin{example}[5-stage commutator-based methods]\label{ex:4order}
\begin{align*}
(i)\phantom{ii} \qquad \Phi_h &=  \exp(\tfrac{h}{6} \bar R) \,\, \exp(\tfrac{h}{2} J) \,\,  \exp(\tfrac{2h}{3} \bar R-\tfrac{h^3}{72}C) \,\, \exp(\tfrac{h}{2} J) \,\, \exp(\tfrac{h}{6} \bar R)\\
(ii)\phantom{i} \qquad \Phi_h &=  \exp(\tfrac{h}{6} \bar R) \,\, \exp(\tfrac{h}{2} (J-\tfrac{h^2}{72}C)) \,\, \exp(\tfrac{2h}{3} \bar R) \,\, \exp(\tfrac{h}{2}(J -\tfrac{h^2}{72}C)) \,\, \exp(\tfrac{h}{6} \bar R)\\
(iii) \qquad \Phi_h &=  \exp(\tfrac{(3-\sqrt{3})h}{6} J) \,\, \exp(\tfrac{h}{2} \bar R) \,\, \exp(\tfrac{\sqrt{3}h}{3} J-\tfrac{(2-\sqrt{3})h^3}{24}C)\,\, \exp(\tfrac{h}{2} \bar R) \,\, \exp(\tfrac{(3-\sqrt{3})h}{6} J)
\end{align*}
The placement of the commutator $C$ in the central exponential as in (i) and (iii) results in one additional subproblem $f_c$ to be solved, whose system matrix, however, has no (skew-)symmetry properties in class~a), see (i). Due to the underlying port-Hamiltonian matrix structures, it might be convenient to incorporate the skew-symmetric commutator $C$ into the energy-conserving subproblem $f^{[2]}$. By setting up $f_c^{[2]}(x)=(J-ch^2C)x$, we then obtain commutator-based schemes that exclusively work on only two subproblems $f^{[1]}$, $f_c^{[2]}$ with inherent structures, see (ii) for class a).
\end{example}

\begin{example}[7-stage methods]\label{ex:4order7s} The increase of stages yields more degrees of freedom, and hence already for seven stages a large variety of fourth-order methods. They show a smaller coefficient $c$ in the force gradient term. Examples for both classes of dissipation-preserving symmetric 7-stage schemes are:\\[0.5ex]
\begin{tabular}{l l l l l l}
\quad (i) & $a_1=\tfrac{1}{8},$ & $a_2=\tfrac{3}{8},$ & $b_1=\tfrac{1}{3},$ & $b_2=\tfrac{1}{3},$   
& $c =e_{3_1}=\tfrac{1}{192}$\\ [0.4ex]
\quad (ii) & $a_2=\tfrac{3}{8},$ & $a_3=\tfrac{1}{4},$ & $b_1=\tfrac{1}{6},$ & $b_2=\tfrac{1}{3},$ & $c=e_{3_1}=\tfrac{1}{192}$
\end{tabular}\\[0.5ex]
The respective methods used for the forthcoming numerical simulations (cf.\ Section~\ref{sec:numerics}) contain the commutator in the energy-conserving subproblem.
\end{example}

\subsection{Sixth-order dissipation-preserving schemes}
A sixth-order scheme requires the satisfaction of all sixth-order conditions. Under the condition of positive step sizes, however, some of the terms $e_{5_\ell}$, $\ell=1,...,6$, cannot be canceled out at the same time  in \eqref{scheme:sym_order}, just like $e_{3_1}$ and $e_{3_2}$ before. Contradicting terms are, e.g., $e_{5_1}$ and $e_{5_4}$, as well as $e_{5_3}$ and $e_{5_6}$, as shown in \cite{chin2005structure}.

We proceed from the derived symmetric commutator-based fourth-order schemes \eqref{scheme:sym-force}, where the forced gradient term  $(-ch^3C)$ is incorporated in both subproblems, i.e.,
\begin{align*}
 \Phi_h &= \prod_{j=1}^m  \exp(a_jh L_c^{[1]})\,\exp(b_jh {L}_c^{[2]}), \\ 
L_c^{[1]} &=\bar R - \lambda^{[1]} c h^2 C, \qquad L_c^{[2]}=J - \lambda^{[2]} c h^2 C, \qquad \lambda^{[1]}+\lambda^{[2]}=1
\end{align*}
(cf.\ Example~\ref{ex:4order}, (ii)). The step size tuples $a, b\in \mathbb{R}_{\geq0}^m$ obey either case a) or case b) of \eqref{scheme:sym}.
According to \eqref{scheme:sym_order} the schemes can be expressed as exponential in terms of the original system matrices $L^{[1]}=\bar R$, $L^{[2]}=J$ and the respective skew-symmetric and symmetric commutators $C$ and $D$ (cf.\ Lemma~\ref{lem:com}). Thereby, the forced gradient yields modifications in higher orders,
\begin{align*}
       \Phi_h&=\exp(\,h\, (e_{1_1}{L}^{[1]}+e_{1_2} {L}^{[2]}) +h^3\,(\, (e_{3_1}-(\lambda^{[1]} e_{1_1}+\lambda^{[2]}e_{1_2})c)\,{C} + e_{3_2}{D}\,)\\ 
       & \quad \quad \, +
       h^5\, (\, (e_{5_1}-\lambda^{[2]}ce_{3_1}) [{L}^{[1]},[{L}^{[1]},{C}]]
       +(e_{5_2}+2\lambda^{[1]}c e_{3_1}+ \lambda^{[2]}c e_{3_2})    [{L}^{[1]},[L^{[1]},{D}]] \\
       & \quad \quad \qquad + \hspace*{1.65cm}
      e_{5_3}\,\,  [{L}^{[1]},[{L}^{[2]},{D}]]  
      + (e_{5_4}-\lambda^{[1]}c e_{3_1}-2 \lambda^{[2]}c e_{3_2})  [{L}^{[2]},[{L}^{[1]},{C}]] \\
       & \quad \quad \qquad +
       (e_{5_5} + \lambda^{[1]}c e_{3_2})  [{L}^{[2]},[{L}^{[2]},{C}]] + \hspace*{3.4cm} e_{5_6} \,\, [{L}^{[2]},[{L}^{[2]},{D}]]\,) + \mathcal{O}(h^7)\,).
 \end{align*}
Choosing the step size tuples $a$, $b$ such that $e_{1_1}(a,b)=e_{1_2}(a,b)=1$ and $e_{3_2}(a,b)=0$, we find the designed fourth order, as $c=e_{3_1}$. In $\mathcal{O}(h^5)$,  the force gradient affects then only the coefficients $e_{5_1}$, $e_{5_2}$ and $e_{5_4}$, since $e_{3_2}=0$. Despite the modification, however, the additional six conditions for sixth order cannot be satisfied simultaneously for positive step sizes. We observe contractions not only between the coefficients of the skew-symmetric commutators ($\tilde e_{5_1}, e_{5_3}, e_{5_5}$) and those of the symmetric commutators ($\tilde e_{5_2}, \tilde e_{5_4}, e_{5_6}$), but also between the coefficients of the symmetric commutators themselves. Non-vanishing skew-symmetric commutators can be straightforward incorporated into a scheme as forced gradients as done for fourth order. The treatment of the symmetric commutators, in contrast, is difficult. As they are in general non-definite, non-vanishing symmetric commutators initially either prevent the sixth order or violate the dissipation inequality when used as forced gradients.
However, we show that dissipation-preserving sixth-order schemes can be designed for special cases as well as by help of reformulations. 

Consider $\lambda^{[1]}=0$ and $\lambda^{[2]}=1$ in the following, then the skew-symmetric commutator $(-ch^3C)$ is exclusively contained in the energy-conserving subproblem.

\begin{scheme}[Schemes for special system matrices]\label{scheme:order6}
A dissipation-preserving sixth-order commutator-based splitting scheme for \eqref{model:autonom} is given by,
\begin{itemize}
\item[a)] if $[L^{[1]},{D}]=0$, a symmetric class-a scheme with $m\geq 5$, that is induced by a solution $a, b\in \mathbb{R}_{\geq0}^m$ of the non-linear system
\begin{align*}
e_{1_1}(a,b)=e_{1_2}(a,b)=1, \quad  \quad e_{3_2}(a,b)=0, \quad  \quad e_{5_4}(a,b)=e_{5_6}(a,b)=0
\end{align*}
with the skew-symmetric forced gradients $(-ch^3C)$, $c=e_{3_1}(a,b)$, as well as $(-h^5Z)$,
\begin{align*}
 Z=\tilde{e}_{5_1}(a,b)[{L}^{[1]},[{L}^{[1]},{C}]]+e_{5_3}(a,b)[{L}^{[1]},[{L}^{[2]},{D}]], \qquad \tilde{e}_{5_1}(a,b)=e_{5_1}(a,b)-c^2.
\end{align*}
\item[b)] if $[L^{[2]},{D}]=0$, a symmetric class-b scheme with $m\geq 5$, that is induced by a solution $a, b\in \mathbb{R}_{\geq0}^m$ of the non-linear system
\begin{align*}
e_{1_1}(a,b)=e_{1_2}(a,b)=1, \quad  \quad e_{3_2}(a,b)=0, \quad  \quad e_{5_2}(a,b)=e_{5_4}(a,b)=0
\end{align*}
with the skew-symmetric forced gradients $(-ch^3C)$, $c=e_{3_1}(a,b)$, as well as $(-h^5Z)$,
\begin{align*}
 Z=\tilde{e}_{5_1}(a,b)[{L}^{[1]},[{L}^{[1]},{C}]]+e_{5_5}(a,b)[{L}^{[2]},[{L}^{[2]},{C}]],\qquad \tilde{e}_{5_1}(a,b)=e_{5_1}(a,b)-c^2.
\end{align*}
\end{itemize}
 Note that the expressions $e_{k_\ell}(a,b)$ for the order conditions (and hence the non-linear systems to be solved) differ in the two scheme classes.
 
 For $m=5$ the step sizes and respective coefficients of the resulting 9-stage schemes are listed in Table~\ref{tab:coeff}.
\end{scheme}

\begin{table}[tb]
\begin{tabular}{rl || rl}
\multicolumn{4}{l}{\underline{Class a)}}\\
$a_1$ & 0.0741652386084523 & $e_{3_1}$ & $\phantom{-}3.4513723374828328 \cdot 10^{-3}$\\
$a_2$ & 0.3312015955219320 & $\tilde{e}_{5_1}$ & $-2.8754632332335723 \cdot 10^{-5}$\\
$a_3$ & 0.1892663317392310 & ${e}_{5_3}$  & $-7.1694217000153600 \cdot 10^{-6}$\\
$b_1$ & 0.2357603332527950\\
$b_2$ & 0.2642396667472050\\
\hline
\multicolumn{2}{l}{coefficients of error terms in $\mathcal{O}(h^5)$ if $[L^{[1]},{D}]\neq 0$:} &
$e_{5_2}$ & $-4.2236874448321610 \cdot 10^{-5}$ \\
\multicolumn{2}{l}{} & $e_{5_5}$ & $-1.3885239090889685 \cdot 10^{-5}$ \\
\\
\multicolumn{4}{l}{\underline{Class b)}}\\
$b_1$ & 0.0951068381417148 &  $e_{3_1}$ & $\phantom{-}2.7595070959696467\cdot 10^{-3}$\\
$b_2$ & 0.2665630297195250 & $\tilde{e}_{5_1}$ & $-1.4777636968353257\cdot 10^{-5}$\\
$b_3$ & 0.2766602642775210 &$e_{5_5}$ & $\phantom{-}1.3097476006184691\cdot 10^{-5}$\\
$a_2$ & 0.2216735783842150 & &\\
$a_3$ & 0.2783264216157850 & &\\
\hline
\multicolumn{2}{l}{coefficients of error terms in $\mathcal{O}(h^5)$ if $[L^{[2]},{D}]\neq 0$:} &
$e_{5_3}$ & $\phantom{-}1.2345801613546542\cdot 10^{-5}$ \\
\multicolumn{2}{l}{} & $e_{5_6}$ & $\phantom{-}1.5869088947135458 \cdot 10^{-5}$ \\[0.7ex]
\end{tabular}
\caption{9-stage class-a and class-b schemes: step sizes and coefficients of commutators, cf.\ Scheme~\ref{scheme:order6}. Symbolic computation of expressions $e_{k_\ell}(a,b)$ in Mathematica, solving of non-linear systems in Python with an accuracy of $\mathcal{O}(10^{-16})$. \label{tab:coeff}}
\end{table}

\begin{example}
A model example for a linear pH-system that satisfies $[L^{[1]},{D}]=0$ is the linear damped harmonic oscillator described by the differential equation 
$ m \ddot{q}=-d\dot{q}-kq$ for the state $q$ with parameters $m, d, k\geq 0$.
\end{example}

Beyond the special cases, we can set up sixth-order class-a schemes that are always dissipation-preserving. Proceeding from Scheme~\ref{scheme:order6}a), if $[L^{[1]},{D}]\neq 0$, the symmetric term $e_{5_2} [L^{[1]},[L^{[1]},{D}]]$ and the skew-symmetric term $e_{5_5} [L^{[2]},[L^{[2]},{C}]]$ (due to the Jacobi identity) appear in $\mathcal{O}(h^5)$. We shift the skew-symmetric term in the commutator $Z$ (force gradient) and rewrite  the symmetric term.
Using the Baker-Campbell-Hausdorff formula, we find
\begin{equation*}
    \exp(ahL^{[1]} - eh^5[L^{[1]},[L^{[1]},{D}]]) = \exp(\tfrac{ah}{2}L^{[1]} + \tfrac{2eh^4}{a}[L^{[1]},{D}])\,\, \exp(\tfrac{ah}{2}L^{[1]} - \tfrac{2eh^4}{a}[L^{[1]},{D}]) + \mathcal{O}(h^7)
\end{equation*}
as well as
\begin{equation*}
    \exp(ahL^{[1]} - eh^5[L^{[1]},[L^{[1]},{D}]]) = \exp(\tfrac{eh^4}{a}[L^{[1]},{D}]) \,\, \exp(ahL^{[1]}) \,\, \exp(-\tfrac{eh^4}{a}[L^{[1]},{D}]) + \mathcal{O}(h^9).
\end{equation*}
As $L^{[1]}$ and $D$ are symmetric, the commutator $[L^{[1]},{D}]$ is skew-symmetric. Thus, we can incorporate one of the above approximations (exponential product) as additional force gradient into the splitting scheme  without violating the dissipation inequality and obtain sixth order (cf.\ Scheme~\ref{scheme:order6-gen}).

\begin{scheme}[Class-a schemes]\label{scheme:order6-gen}
Given the step size tuples $a$, $b$ and coefficients $e_{k_l}(a,b)$ of Scheme~\ref{scheme:order6}a) for $m=5$, cf.\ Table~\ref{tab:coeff}, dissipation-preserving sixth-order symmetric class-a methods for \eqref{model:autonom} are 
\begin{align*}
 \Phi_h &= \prod_{j=1}^2  \exp(a_jh L^{[1]})\,\exp(b_jh {L}_{cz}^{[2]}) \,\,\, S \,\,\,  \prod_{j=1}^2 \exp(b_{3-j}h {L}_{cz}^{[2]}) \exp(a_{3-j}h L^{[1]}),\\ 
L^{[1]} &=\bar R , \qquad L_{cz}^{[2]}=J -  c h^2 C - h^4 Z, \quad c=e_{3_1}\\
& \hspace*{1.9cm} Z=\tilde{e}_{5_1}[{L}^{[1]},[{L}^{[1]},{C}]]
+e_{5_3}[{L}^{[1]},[{L}^{[2]},{D}]]
+e_{5_5}[{L}^{[2]},[{L}^{[2]},{C}]] \\
\intertext{where the term $S$ with $e=e_{5_2}$ can be one of the following}
(i)\phantom{i} \qquad S&=\exp(\tfrac{a_3h}{2}L^{[1]} + \tfrac{2eh^4}{a_3}[L^{[1]},{D}]) \,\,\, \exp(\tfrac{a_3h}{2}L^{[1]} - \tfrac{2eh^4}{a_3}[L^{[1]},{D}])\\
(ii) \qquad S&=\exp(\tfrac{eh^4}{a_3}[L^{[1]},{D}]) \,\,\, \exp(a_3hL^{[1]}) \,\,\, \exp(-\tfrac{eh^4}{a_3}[L^{[1]},{D}]). 
\end{align*}
\end{scheme}
\begin{remark}
Considering Scheme~\ref{scheme:order6-gen}, case (i) implies a 10-stage method where the system matrices of the two subproblems to be solved for $S$ have no (skew-)symmetry properties. Case~(ii) results in a 11-stage method. Hence, the solve of one further system is required, but here all occurring subproblems have inherent structures in consistency to the initial energy-associated decomposition. This might be advantageous in view of the numerical integration. 
\end{remark}

\section{Non-autonomous PH-Systems}\label{sec:nonaut}
In this section we present and explore splitting strategies for linear non-autonomous pH-systems of the form
\begin{align}\label{model:nonautonomous}
	\dot{x}(t) &= (J-R)\,x(t) + Bu(t), \qquad  
       && x(t_0) = x_0,\\
       y(t)&= B^Tx(t)\nonumber
\end{align}
with $u:[t_0,T]\rightarrow \mathbb{R}^p$ sufficiently smooth,  $J=-J^T$ and $R\geq 0$.
Special attention is paid to higher order accuracy and to the preservation of dissipation, i.e., $\tfrac{\mathrm{d}}{\mathrm{d}t}H(x(t)\leq y(t)^Tu(t)$ with $H(x)=\tfrac{1}{2}x^Tx$.

The intuitive extension of the energy-associated decomposition \eqref{model:autonom} to the non-autonomous setting by incorporating the time-dependent input into the dissipative subproblem allows the preservation of dissipation in case of Strang splitting ($p=2$), cf.\ \cite{frommer2023operator}, but not in higher order. The occurring commutators are differential operators (Lie derivatives) that lose the port-Hamiltonian structural properties.
We address this problem by considering a port-based decomposition. The splitting strategy has similarities with the "frozen time" technique discussed in \cite{blanes2006splitting,blanes2010splitting}. In addition, we extend the results from the energy-associated splitting of the autonomous case to the non-autonomous case by adapting the formula for the variation of constants. A similar procedure was investigated for methods of first and second order in \cite{faou2015analysis,ostermann2012error}.
 
\subsection{Port-based Decomposition}
\label{subsec: PBD}
We investigate a splitting based on a port-based decomposition for linear non-autonomous systems with general constant system matrix $A\in \mathbb{R}^{n\times n}$ and non-homogeneity $b:[t_0,T]\rightarrow \mathbb{R}^n$ sufficiently smooth. The decomposition is applied to the transformed autonomous version with $z=(x^T,t)^T$
\begin{align}\label{model:nonaut}
&\dot{z} = \begin{pmatrix} \dot{x} \\ \dot{t}  \end{pmatrix} = \begin{pmatrix} Ax + b(t) \\ 1  \end{pmatrix} =f^{[1]}(z)+f^{[2]}(z),  
&& {z}(0) = \begin{pmatrix} {x}_0 \\ {t}_0  \end{pmatrix},\\
&f^{[1]}(z)=\begin{pmatrix} b(t) \\ 0  \end{pmatrix}, \qquad  f^{[2]}(z)=\begin{pmatrix} Ax \\ 1  \end{pmatrix}.\nonumber
\end{align}
The Lie derivatives to the two subproblems are given by
\begin{align*}
     \mathcal{L}^{[1]} = \langle b(\cdot) , \nabla_x\rangle, \qquad \mathcal{L}^{[2]} = \langle A \,\cdot \,, \nabla_x\rangle + {\partial_ t}
 \end{align*}
 with the Euclidean scalar product $\langle \, \cdot \,, \, \cdot\,\rangle$.
In the context of port-Hamiltonian systems, we have  $b(t)=Bu(t)$ and $A=J-R$ with $J=-J^T$, $R\geq 0$, then the subproblem $f^{[1]}$ contains the impact of ports and  $f^{[2]}$ the inner dynamics.

\begin{lemma} \label{lem:C0}
The commutator $\mathcal{C}=[\mathcal{L}^{[1]},[\mathcal{L}^{[1]},\mathcal{L}^{[2]}]]$ vanishes for all functions $b$ and quadratic matrices $A$, i.e.,  $\mathcal{C}= 0$. Moreover, all commutators in higher order, except of $\mathcal{D}$, $[\mathcal{L}^{[2]},\mathcal{D}]$ and $ [\mathcal{L}^{[2]},[\mathcal{L}^{[2]},[...,[\mathcal{L}^{[2]},\mathcal{D}]]]]$, also vanish.
\end{lemma}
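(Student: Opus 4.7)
The plan is to exploit the fact that the Lie algebra $\mathfrak{g}$ generated by $\mathcal{L}^{[1]}$ and $\mathcal{L}^{[2]}$ is two-step solvable, with a large abelian ideal containing $\mathcal{L}^{[1]}$. Both claims of the lemma then follow by peeling nested commutators from the inside out.

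First I would introduce the family of first-order differential operators $\mathcal{V}_p := \langle p(t), \nabla_x \rangle$ indexed by smooth maps $p \colon [t_0, T] \to \mathbb{R}^n$, so that $\mathcal{L}^{[1]} = \mathcal{V}_b$. A direct evaluation on a test function $g(x,t)$ produces the two basic identities
\begin{align*}
[\mathcal{V}_p, \mathcal{V}_q] = 0, \qquad [\mathcal{L}^{[2]}, \mathcal{V}_p] = \mathcal{V}_{\dot{p} - Ap}.
\end{align*}
The first holds because $p, q$ do not depend on $x$ and the two second-order contributions cancel by symmetry of mixed partials; the second is a short bookkeeping of the $x$-drift and $\partial_t$ contributions of $\mathcal{L}^{[2]}$. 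Taking $p = b$ in the second gives $[\mathcal{L}^{[1]}, \mathcal{L}^{[2]}] = \mathcal{V}_{Ab - \dot{b}} \in V := \{\mathcal{V}_p\}$, and then the first identity immediately forces $\mathcal{C} = [\mathcal{V}_b, \mathcal{V}_{Ab - \dot{b}}] = 0$.

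For the higher-order statement, the same two identities show that $V$ is an abelian subalgebra of $\mathfrak{g}$ stable under $\mathrm{ad}_{\mathcal{L}^{[2]}}$, and that $\mathfrak{g} = \mathbb{R}\mathcal{L}^{[2]} + V$; consequently $[\mathfrak{g}, \mathfrak{g}] \subseteq V$ is itself abelian, so $\mathfrak{g}$ is two-step solvable. I would then observe that by iterated use of the Jacobi identity $[[X,Y],Z] = [X,[Y,Z]] - [Y,[X,Z]]$, every nested commutator of $\mathcal{L}^{[1]}$ and $\mathcal{L}^{[2]}$ is a linear combination of right-nested brackets $[X_1, [X_2, [\ldots, [X_{k-1}, X_k]\ldots]]]$ with each $X_i \in \{\mathcal{L}^{[1]}, \mathcal{L}^{[2]}\}$. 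Peeling such a bracket from the innermost pair outward, the partial result at each stage is either $0$ or an element of $V$; the moment an outer $X_i = \mathcal{L}^{[1]}$ is applied to a $V$-element, the bracket collapses by $[V,V] = 0$. Hence for $k \geq 3$ the only non-vanishing right-nested commutators are those with $X_1 = \cdots = X_{k-2} = \mathcal{L}^{[2]}$ and a single $\mathcal{L}^{[1]}$ at the innermost position, which equal $\pm \mathrm{ad}_{\mathcal{L}^{[2]}}^{k-1}\mathcal{L}^{[1]} = \pm \mathrm{ad}_{\mathcal{L}^{[2]}}^{k-3}\mathcal{D}$, matching exactly the list in the lemma.

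The main obstacle I anticipate is the combinatorial bookkeeping of the reduction to right-nested form rather than any individual calculation. A Hall-basis enumeration of the free Lie algebra on two generators would also work but obscures the structural content; distilling everything into the two identities above and the derived statement $[V, V] = 0$ collapses the whole argument to the single observation that an outer $\mathcal{L}^{[1]}$ annihilates any element of $V$.
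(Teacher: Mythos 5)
Your proof is correct and takes essentially the same route as the paper's: both rest on the two observations that $[\mathcal{L}^{[1]},\mathcal{L}^{[2]}]$ is again an operator of the form $\langle p(t),\nabla_x\rangle$ and that any two such operators commute, so that only iterated applications of $\mathrm{ad}_{\mathcal{L}^{[2]}}$ to $\mathcal{D}$ survive. Your Lie-algebraic packaging ($V$ an abelian subalgebra stable under $\mathrm{ad}_{\mathcal{L}^{[2]}}$, plus the Jacobi-identity reduction of arbitrary nested brackets to right-nested ones) merely makes explicit the combinatorial step that the paper dispatches with \emph{analogously}.
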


\begin{proof}
The Lie bracket is a linear differential operator. For any differentiable function $F:\mathbb{R}^{n+1}\rightarrow \mathbb{R}^m$ and $z=(x^T,t)^T\in \mathbb{R}^{n+1}$, we have
\begin{align*}
[\mathcal{L}^{[1]},\mathcal{L}^{[2]}]F(z) &=( \mathcal{L}^{[1]}\mathcal{L}^{[2]}-\mathcal{L}^{[2]}\mathcal{L}^{[1]})F(z)\\
&=\langle \ell(t), \nabla_x\rangle F(z), &&  \ell=-(\partial_t b-Ab).
\end{align*}
Hence, the commutator $[\mathcal{L}^{[1]},\mathcal{L}^{[2]}]=-[\mathcal{L}^{[2]},\mathcal{L}^{[1]}]$  only contains a $t$-dependent function and a gradient with respect to $x$, just as $\mathcal{L}^{[1]}$. This structure is handed over to $\mathcal{D}$ and to all other higher order commutators that consist of $\mathcal{L}^{[2]}$ and $\mathcal{D}$, i.e.,
\begin{align*}
\mathcal{D}=[\mathcal{L}^{[2]},[\mathcal{L}^{[1]},\mathcal{L}^{[2]}]]&=\langle k_0, \nabla_x\rangle,  && k_0\,=\partial_t \ell - A \ell,\\
\underbrace{[\mathcal{L}^{[2]},[\mathcal{L}^{[2]},[...,[\mathcal{L}^{[2]},}_{m-\text{times}}\mathcal{D}]]]]
&=\langle k_m, \nabla_x \rangle, &&
 k_m=\partial_t k_{m-1} - A k_{m-1}, \quad m=1,2,....
\end{align*}
Due to this structure, we obtain
\begin{align*}
\mathcal{C}=[\mathcal{L}^{[1]},[\mathcal{L}^{[1]},[\mathcal{L}^{[2]}]]=\sum_{i,j=1}^n b_i(\cdot ) \ell_j(\cdot) \,\partial_{x_i,x_j}-\sum_{i,j=1}^n \ell_i(\cdot) b_j(\cdot)  \,\partial_{x_i,x_j}=0,
\end{align*}
analogously all Lie brackets with $\mathcal{L}^{[1]}$ and another $t$-dependent differential operator in $x$ vanish. Commutators that contain $\mathcal{C}$ are zero, as $\mathcal{C}=0$.
\end{proof}

Due to the properties of the commutators, only three order conditions have to be satisfied to establish a fourth-order symmetric splitting scheme \eqref{scheme:sym_order}, i.e., $e_{1_1}(a,b)=e_{1_2}(a,b)=1$ and $e_{3_2}(a,b)=0$ for the step size tuples $a$, $b$. Increasing the order by two goes with only one further order condition, e.g., $e_{5_6}(a,b)=0$ in case of $p=6$. Hence, the desired order $p$ and the minimal number of required degrees of freedom in the scheme are related by $m= \tfrac{p}{2}+1$ for $p$ even. There is no necessity for forced gradients.

\begin{scheme}[Higher-order schemes]\label{scheme:port}\quad\\
$p=4$: The two symmetric 5-stage methods ($m=3$) for \eqref{model:nonaut} are given by, cf.\ Scheme~\ref{scheme:5stage},  \\[0.7ex]
\begin{tabular}{l l l l }
\,\, \qquad a) & $a_1=\tfrac{1}{6},$ & $a_2=\tfrac{2}{3},$ & $b_1=\tfrac{1}{2}$ \\[0.5ex]
\,\, \qquad b) & $a_2=\tfrac{1}{2},$ & $b_1=\tfrac{3-\sqrt{3}}{6},$ & $b_2=\tfrac{\sqrt{3}}{3}$ 
\end{tabular}\\[0.7ex]
$p=6$: The two symmetric 7-stage methods ($m=4$) for \eqref{model:nonaut} are given by \\[0.7ex]
\begin{tabular}{l l l l l}
\,\, \qquad a) & $a_1=\tfrac{1}{12},$ & $a_2=\tfrac{5}{12},$ & $b_1= \frac{5 - \sqrt{5}}{10},$ & $b_2 = \tfrac{\sqrt{5}}{5}$\\[0.5ex]
\,\, \qquad b) &  $a_2= \tfrac{5}{18},$ & $a_3 = \tfrac{4}{9},$ &
 $b_1 = \tfrac{5 - \sqrt{15}}{10},$ & $b_2= \tfrac{\sqrt{15}}{10}$ 
\end{tabular}\\[0.7ex]
Class-a schemes start with a solve of subproblem $f^{[1]}$ (hence $b_m=0$), class-b schemes with a solve of subproblem $f^{[2]}$ (hence $a_1=0$).
\end{scheme}

\begin{lemma} Let a linear non-autonomous pH-system of the form \eqref{model:nonautonomous} be given. Let $H$ be the Hamiltonian, $H(x)=\tfrac{1}{2}x^Tx$. Using a port-based decomposition, the splitting schemes with step size tuples $a$, $b\in \mathbb{R}^m_{\geq 0}$ given in Scheme~\ref{scheme:port} satisfy the dissipation inequality in an approximative quadrature-type version, i.e.,
\begin{align*}
 H(\Phi_{t_0+h,t_0}(x_0))-H(x_0) \leq d^{\mathrm{PBS}}_{t_0,h}:=\sum_{j=1}^m \int_{t_{j-1}^{[1]}}^{t_{j}^{[1]}} (y_j^{[1]}(t))^T \, \mathrm{d}t \,\, u(t_{j-1}^{[2]}).
 \end{align*}
Here, $y_j^{[1]}=B^T x_j^{[1]} $ denotes the output to the intermediate solutions of the first subproblem, regarding the underlying time grids $t_j^{[1]}=t_{j-1}^{[1]}+a_j h$ and  $t_j^{[2]}=t_{j-1}^{[2]}+b_j h$ with $ {t}_{0}^{[1]}={t}_{0}^{[2]}=t_0$.
\end{lemma}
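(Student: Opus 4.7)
The plan is to telescope the Hamiltonian increment over the $2m$ elementary sub-flows produced by the port-based splitting, and to estimate each piece differently depending on whether it comes from the inner dynamics $f^{[2]}$ or from the port subproblem $f^{[1]}$. I would fix the class-a ordering of the compositions (class b follows by analogous re-indexing because the trivial stage $a_1=0$ simply drops out of both sums), and denote by $x_j^{[1]}$ and $x_j^{[2]}$ the intermediate trajectories on $[t_{j-1}^{[1]},t_j^{[1]}]$ and $[t_{j-1}^{[2]},t_j^{[2]}]$, each sub-flow inheriting the end-state of the previous one as its initial condition. Then, since $H$ is quadratic, a simple telescoping yields
\begin{equation*}
H(\Phi_{t_0+h,t_0}(x_0))-H(x_0) = \sum_{j=1}^m \bigl[H(x_j^{[1]}(t_j^{[1]}))-H(x_j^{[1]}(t_{j-1}^{[1]}))\bigr] + \sum_{j=1}^m \bigl[H(x_j^{[2]}(t_j^{[2]}))-H(x_j^{[2]}(t_{j-1}^{[2]}))\bigr].
\end{equation*}

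Next I would analyse the two sub-step types separately. On an $f^{[2]}$-interval the physical time advances with $\dot t=1$ and $x_j^{[2]}$ solves the autonomous linear pH-flow $\dot x=(J-R)x$, so using $J=-J^T$ and $R\geq 0$ the corresponding increment equals $-\int x_j^{[2]}(t)^T R\, x_j^{[2]}(t)\,\mathrm{d}t\leq 0$, as in the autonomous proof of Lemma~3. On an $f^{[1]}$-interval the time component is frozen by construction of $f^{[1]}$; crucially, by the definition of the two time grids and the stage ordering, the frozen value at the $j$-th port sub-flow is exactly $t_{j-1}^{[2]}$, so $\dot x=Bu(t_{j-1}^{[2]})$ with constant forcing. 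This gives, with $y_j^{[1]}=B^T x_j^{[1]}$,
\begin{equation*}
H(x_j^{[1]}(t_j^{[1]}))-H(x_j^{[1]}(t_{j-1}^{[1]})) = \int_{t_{j-1}^{[1]}}^{t_j^{[1]}} x_j^{[1]}(t)^T B \, \mathrm{d}t \; u(t_{j-1}^{[2]}) = \int_{t_{j-1}^{[1]}}^{t_j^{[1]}} y_j^{[1]}(t)^T \, \mathrm{d}t \; u(t_{j-1}^{[2]}).
\end{equation*}
Adding both sums and dropping the non-positive $f^{[2]}$-contributions yields $H(\Phi_{t_0+h,t_0}(x_0))-H(x_0)\leq d^{\mathrm{PBS}}_{t_0,h}$, as claimed.

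The main obstacle, such as it is, is the bookkeeping underlying the sentence "the frozen value is exactly $t_{j-1}^{[2]}$": one has to check for both class a) and class b) that the $j$-th port sub-flow is preceded by precisely the first $j-1$ dynamics sub-flows (so that the input is sampled at the right argument in $d^{\mathrm{PBS}}_{t_0,h}$), and that the zero-length stages ($b_m=0$ in class a, $a_1=0$ in class b) contribute nothing to either telescoping sum. No Baker--Campbell--Hausdorff reasoning enters here, since the estimate is carried out stage-wise on \emph{exact} sub-flows; the "approximative" character of the inequality comes solely from the fact that the continuous quadrature $\int y^T u\,\mathrm{d}t$ of the underlying pH-system has been replaced by a sum with frozen inputs $u(t_{j-1}^{[2]})$.
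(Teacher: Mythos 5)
Your proposal is correct and follows essentially the same route as the paper: telescoping the quadratic Hamiltonian over the $2m$ exact sub-flows, showing the $f^{[2]}$-increments are non-positive via $J=-J^T$ and $R\geq 0$, and identifying the $f^{[1]}$-increments with the frozen-input integrals $\int (y_j^{[1]})^T\,\mathrm{d}t\; u(t_{j-1}^{[2]})$ that define $d^{\mathrm{PBS}}_{t_0,h}$. The additional bookkeeping remarks about the stage ordering and zero-length stages are consistent with the paper's setup and do not change the argument.
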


\begin{proof}
A step of the splitting scheme is performed via the solves of $2m$ differential systems, i.e., $\Phi_{t_0+h,t_0}(x_0)=x_{m}^{[2]}(t_m^{[2]})$. Initialized with $x_{1}^{[1]}({t}_{0}^{[1]})=x_0$ and ${t}_{0}^{[1]}={t}_{0}^{[2]}=t_0$, the systems are given by, for $j=1,...,m$,
\begin{align*}
\dot x_{j}^{[1]}&=f^{[1]}_j , &&  x_{j}^{[1]}({t}_{j-1}^{[1]})=x_{j-1}^{[2]}({t}_{j-1}^{[2]}),\\
\dot x_{j}^{[2]}&=f^{[2]}_j, &&  x_{j}^{[2]}({t}_{j-1}^{[2]})=x_{j}^{[1]}(t_j^{[1]})
\end{align*} 
where $f_j^{[1]}=Bu(t_{j-1}^{[2]})$ and $f_j^{[2]}=(J-R)x_{j}^{[2]}$ as well as $t_j^{[1]}=t_{j-1}^{[1]}+a_j h$ and $t_j^{[2]}=t_{j-1}^{[2]}+b_j h$.

For the quadratic Hamiltonian we obtain
\begin{align*}
H(x_{m}^{[2]}(t_m^{[2]}))-H(x_0)=
\sum_{j=1}^m \bigg(H(x_{j}^{[2]}(t_j^{[2]}))-H(x_{j}^{[2]}(t_{j-1}^{[2]}))\bigg)+
\bigg(H(x_{j}^{[1]}(t_j^{[1]}))-H(x_{j}^{[1]}(t_{j-1}^{[1]}))\bigg),
\end{align*}
where the energy differences satisfy
\begin{align*}
H(x_{j}^{[1]}(t_j^{[1]}))-H(x_{j}^{[1]}(t_{j-1}^{[1]}))&=\int_{t_{j-1}^{[1]}}^{t_{j}^{[1]}} \tfrac{\mathrm{d}}{\mathrm{d}t}H(x_j^{[1]}(t)) \, \mathrm{d}t=\int_{t_{j-1}^{[1]}}^{t_{j}^{[1]}} (x_j^{[1]}(t))^T \, Bu(t_{j-1}^{[2]})
\, \mathrm{d}t,\\
H(x_{j}^{[2]}(t_j^{[2]}))-H(x_{j}^{[2]}(t_{j-1}^{[2]}))&=\int_{t_{j-1}^{[2]}}^{t_{j}^{[2]}} \tfrac{\mathrm{d}}{\mathrm{d}t}H(x_j^{[2]}(t)) \, \mathrm{d}t=\int_{t_{j-1}^{[2]}}^{t_{j}^{[2]}} (x_j^{[2]}(t))^T \, (J -R) \, x_j^{[2]}(t) \, \mathrm{d}t \leq 0
\end{align*}
for all $j$, since $J$ is skew-symmetric and $R$ is symmetric positive semi-definite. The result follows with $y_j^{[1]}(t)=B^T x_j^{[1]} (t)$ for $t\in [t_{j-1}^{[1]},t_{j}^{[1]}]$.
\end{proof}

\begin{example}
For the two 5-stage methods of Scheme~\ref{scheme:port}, the estimator $d_\mathrm{PBS}\approx \int_{t_0}^{t_0+h} y(t)^T u(t) \, \mathrm{d}t$ particularly reads 
\begin{align*}
\text{a)\quad }d^{\mathrm{PBS}}_{t_0,h} &= \int_{t_0}^{t_0+{h}/{6}}\hspace*{-0.3cm}(y_1^{[1]}(t))^T  \mathrm{d}t \,\, u(t_0)+ \int_{t_0+{h}/{6}}^{t_0+{5h}/{6}} \hspace*{-0.3cm}(y_2^{[1]}(t))^T  \mathrm{d}t \,\, u(t_0+\tfrac{h}{2})+\int_{t_0+{5h}/{6}}^{t_0+h}\hspace*{-0.3cm}(y_3^{[1]}(t))^T  \mathrm{d}t \,\, u(t_0+h)\\
\text{b)\quad }d^{\mathrm{PBS}}_{t_0,h} &= \int_{t_0}^{t_0+{h}/{2}}\hspace*{-0.05cm}(y_2^{[1]}(t))^T  \mathrm{d}t \,\, u(t_0+\tfrac{(3-\sqrt{3})h}{6})+ \int_{t_0+{h}/{2}}^{t_0+h} \hspace*{-0.05cm}(y_3^{[1]}(t))^T  \mathrm{d}t \,\, u(t_0+\tfrac{(3+\sqrt{3})h}{6})\\
\end{align*}
\end{example}

\begin{remark}
Using a port-based decomposition for a pH-system, the solves of the second subproblem for the inner dynamics can be performed with a commutator-based scheme of Section~\ref{sec:aut}.
\end{remark}

\subsection{Combination of Energy-associated Splitting and Quadrature}
\label{subsec: ESQ}
The exact solution of a linear non-autonomous system with general constant system matrix $A\in \mathbb{R}^{n\times n}$ and non-homogeneity $b:[t_0,T]\rightarrow \mathbb{R}^p$ sufficiently smooth is given by
\begin{align}\label{eq:exact}
    x(t) = \exp((t-t_0)A)\,x_0 + \int_{t_0}^t \exp((t - s)A)\,b(s)\ \mathrm{d}s.
\end{align}
Approximating the integral term with a quadrature rule, we obtain
\begin{align*}
    \Psi_{t_0+h,t_0}x_0 &= \exp(hA)x_0 + h\sum_{i=0}^\nu w_i \exp((h-s_i)A)\, b(t_0+s_i), \quad s_i\in[0,h].
    \end{align*}
 with quadrature nodes $s_i$ and weights $w_i$.
 A Simpson rule particularly yields a fourth-order approximation (here, $s_i=i\tfrac{h}{2}$, $i=0,1,2$, and $w_0=w_2=\tfrac{1}{6}$, $w_1=\tfrac{2}{3}$). 
 
 \begin{lemma}\label{scheme:nonau}
 Let a linear non-autonomous pH-system of the form \eqref{model:nonautonomous} be given. Let $H$ be the Hamiltonian, $H(x)=\tfrac{1}{2}x^Tx$. 
 Let $\Phi$ be a fourth-order commutator-based splitting scheme on top of an energy-associated decomposition of the respective homogeneous pH-system, where the skew-symmetric commutator is incorporated into the energy-conserving subproblem (cf.\ Scheme~\ref{scheme:5stage}),
 then 
\begin{align}\label{scheme:nonaut2}  \Psi^\Phi_{t_0+h,t_0}x_0 = \Phi_h (x_0+\tfrac{h}{6}Bu(t_0))+\tfrac{2h}{3}\Phi_{\frac{h}{2}} (Bu(t_0+\tfrac{h}{2})) +\tfrac{h}{6}Bu(t_0+h)
\end{align}
is a fourth-order scheme for the original (inhomogeneous) system.
 Moreover, the scheme satisfies the dissipation inequality in an approximative quadrature-type version, i.e.,
 \begin{align*}
 H(\Psi^\Phi_{t_0+h,t_0}x_0) - H(x_0) &\leq d^{\mathrm{ESQ}}_{t_0,h}:=\tfrac{1}{2} (\|x_0\|_2+\|\Psi^\Phi_{t_0+h,t_0}x_0\|_2) \,\|B\|_2 \, Q(\|u\|_2),
 \end{align*}
 with $Q(\|u\|_2)$ approximation of the integral $\int_{t_0}^{t_0+h} \|u(s)\|_2 \,\mathrm{d}s$ via Simpson quadrature rule.
 \end{lemma}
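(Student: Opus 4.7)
The plan is to establish the two assertions---fourth-order accuracy and the quadrature-type dissipation estimate---separately, exploiting throughout the linearity of the splitting map $\Phi_h$.

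For the order claim, I would begin with the variation-of-constants formula~\eqref{eq:exact} applied to~\eqref{model:nonautonomous} with $A=J-R$ and $b=Bu$, and approximate the integral over $[t_0,t_0+h]$ by Simpson's rule on the nodes $t_0,\ t_0+h/2,\ t_0+h$, which yields a local quadrature error of $\mathcal{O}(h^5)$. After grouping the $t_0$-contribution with $x_0$, the approximation takes the form
\begin{align*}
x(t_0+h)=\exp(hA)\bigl(x_0+\tfrac{h}{6}Bu(t_0)\bigr)+\tfrac{2h}{3}\exp(\tfrac{h}{2}A)Bu(t_0+\tfrac{h}{2})+\tfrac{h}{6}Bu(t_0+h)+\mathcal{O}(h^5).
\end{align*}
Since $\Phi$ is a fourth-order scheme for the homogeneous pH-system, $\Phi_\tau-\exp(\tau A)=\mathcal{O}(\tau^5)$; replacing the two exponentials by $\Phi_h$ and $\Phi_{h/2}$ then contributes $\mathcal{O}(h^5)$ and $\mathcal{O}(h^6)$ respectively to the local error. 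The overall local truncation error is $\mathcal{O}(h^5)$, which gives fourth-order global convergence.

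For the dissipation claim, I would decompose $\Psi^\Phi_{t_0+h,t_0}x_0=\Phi_h x_0+\Delta$, where
\begin{align*}
\Delta=\tfrac{h}{6}\Phi_h Bu(t_0)+\tfrac{2h}{3}\Phi_{h/2}Bu(t_0+\tfrac{h}{2})+\tfrac{h}{6}Bu(t_0+h)
\end{align*}
by linearity of $\Phi_h$. With $H(y)=\tfrac{1}{2}\|y\|_2^2$, expanding and rearranging yields
\begin{align*}
H(\Psi^\Phi_{t_0+h,t_0}x_0)-H(x_0)=\tfrac{1}{2}\bigl(\|\Phi_h x_0\|_2^2-\|x_0\|_2^2\bigr)+\tfrac{1}{2}\langle\Phi_h x_0+\Psi^\Phi_{t_0+h,t_0}x_0,\Delta\rangle.
\end{align*}
The dissipation property of $\Phi$ for the homogeneous pH-system gives $\|\Phi_h x_0\|_2\leq\|x_0\|_2$, so the first bracket is non-positive; Cauchy--Schwarz applied to the second term then yields the prefactor $\tfrac{1}{2}(\|x_0\|_2+\|\Psi^\Phi_{t_0+h,t_0}x_0\|_2)$. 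A second use of the homogeneous dissipation, namely $\|\Phi_\tau Bu(\cdot)\|_2\leq\|Bu(\cdot)\|_2\leq\|B\|_2\|u(\cdot)\|_2$, combined with a term-wise triangle inequality on $\Delta$, produces $\|\Delta\|_2\leq\|B\|_2\,Q(\|u\|_2)$ with $Q(\|u\|_2)=\tfrac{h}{6}(\|u(t_0)\|_2+4\|u(t_0+\tfrac{h}{2})\|_2+\|u(t_0+h)\|_2)$ the Simpson approximation of $\int_{t_0}^{t_0+h}\|u(s)\|_2\,\mathrm{d}s$, and the estimator $d^{\mathrm{ESQ}}_{t_0,h}$ follows.

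The main obstacle I anticipate is the dissipation part rather than the order estimate. Unlike the autonomous case, the inhomogeneity enters additively through the quadrature nodes, and energy differences cannot be telescoped across substages as in the earlier commutator-based proofs. The Cauchy--Schwarz step is precisely what replaces the continuous pointwise integrand $y(t)^Tu(t)$ by the averaged state-norm factor $\tfrac{1}{2}(\|x_0\|_2+\|\Psi^\Phi_{t_0+h,t_0}x_0\|_2)\|B\|_2$, which explains why the resulting bound is an approximative quadrature-type surrogate rather than a literal discrete analogue of $\tfrac{\mathrm{d}}{\mathrm{d}t}H\leq y^Tu$.
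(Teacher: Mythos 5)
Your proposal is correct and follows essentially the same route as the paper: variation of constants plus Simpson quadrature for the order claim, and the decomposition $\Psi^\Phi x_0=\Phi_h x_0+\Delta$ with the identity $H(\Psi^\Phi x_0)-H(x_0)=\tfrac12(\|\Phi_h x_0\|_2^2-\|x_0\|_2^2)+\tfrac12\langle \Phi_h x_0+\Psi^\Phi x_0,\Delta\rangle$, Cauchy--Schwarz, and $\|\Phi_\tau\|_2\leq 1$ for the dissipation estimate. Your version is in fact slightly more explicit than the paper's on the order argument (tracking the $\mathcal{O}(h^5)$ and $\mathcal{O}(h^6)$ contributions separately).
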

 
 \begin{proof}
 The exact solution is a superposition of the solution of the homogeneous system and of a particular solution,  cf.\ \eqref{eq:exact} with $A=J-R$ and $b(t)=Bu(t)$. Combining a Simpson rule for the particular solution (integral) with an energy-associated splitting for the homogeneous system yields the scheme in \eqref{scheme:nonaut2}, which is of fourth order due to the underlying fourth-order approximations.
 
Denoting
\begin{align*}
x_h=\Psi^\Phi_{t_0+h,t_0}x_0=\Phi_h x_0 +v, \qquad v=h\, (\tfrac{1}{6} \Phi_h Bu(t_0) + \tfrac{2}{3}\Phi_{\frac{h}{2}} Bu(t_0+\tfrac{h}{2}) +\tfrac{1}{6}Bu(t_0+h)),
\end{align*}
we obtain
\begin{align*} 
H(x_{h}) - H(x_0) &= \tfrac{1}{2}(\|x_h\|_2^2- \|x_0\|_2^2)\\
&=\tfrac{1}{2}( \|\Phi_h x_0\|_2^2- \|x_0\|_2^2) +\tfrac{1}{2} (\langle \Phi_h x_0, v\rangle+ \langle x_h,v\rangle ) \\
&\leq \tfrac{1}{2} (\| \Phi_h x_0\|_2 \|v\|_2+ \|x_h\|_2 \|v\|_2)\\
 & \leq  \tfrac{1}{2} (\|x_0\|_2+\|x_h\|_2) \,\|B\|_2 \, \bar u, \quad \bar{u}=h\, (\tfrac{1}{6} \|u(t_0)\|_2 + \tfrac{2}{3}\|u(t_0+\tfrac{h}{2})\|_2 +\tfrac{1}{6}\|u(t_0+h)\|_2).
\end{align*} 
We use here the fact that $\|\Phi_h\|_2\leq 1$ for our scheme $\Phi_h=\Pi_{j=1}^m \exp(a_j h(-R))\exp(b_j h(J-cC))$ with $C$ skew-symmetric, since $\|\exp(M)\|_2=1$ for $M$ skew-symmetric and $\|\exp(M)\|_2\leq 1$ for $M$ negative semi-definite.
 \end{proof}
 
 \begin{remark}
The fourth order in Lemma~\ref{scheme:nonau} could be alternatively also achieved if in \eqref{scheme:nonaut2} the half step $\Phi_{\frac{h}{2}}$ is performed with a third order (non-symmetric) commutator-based splitting scheme due to the factor $\tfrac{2h}{3}$, e.g., with the 4-stage method $\tilde{\Phi}$, \cite{faou2015analysis},
$$\tilde{\Phi}_h = \exp(\tfrac{h}{4}(L^{[2]}-\tfrac{h^2}{48}C))\exp(\tfrac{2h}{3}L^{[1]})\exp(\tfrac{3h}{4}(L^{[2]}-\tfrac{h^2}{48}C))\exp(\tfrac{h}{3} L^{[1]}),$$ using $L^{[1]}=\bar R$ and $L^{[2]}=J$. This saves the solve of one subsystem in every time step.
\end{remark}

\section{Numerical Results}\label{sec:numerics}
In this section we investigate the performance of the presented schemes. As benchmark examples act the damped linear oscillator without and with driving force as well as a model for rigid body dynamics. The numerical results  confirm the analytical findings concerning convergence order and dissipation inequality. 

\begin{testcase}[Driven damped linear oscillator]\label{test1}
The equation of motion for the damped linear oscillator with position $q\in \mathcal{C}^2(I,\mathbb{R})$ is given by
\begin{align*}
    m\ddot{q} = -d\dot{q} - kq-f(t)
\end{align*}
with $m, d, k\geq 0$. The driving force is (i) $f(t)=0$ and (ii) $f(t)=f_0 \cos(\omega t)$, implying either an autonomous or a non-autonomous system. We deal with the port-Hamiltonian formulation in terms of the state $x=(k^{1/2} q, m^{1/2} \dot q)^T$ on $I=[0,T]$ with input $u=f$
\begin{align*}
 \dot x&=(J-R)x+Bu, \quad x(0)=x_0, \qquad y=B^T x,\\
 &J=
\left (\begin{array}{c c} 0 &(\tfrac{k}{m})^{1/2}\\ -(\tfrac{k}{m})^{1/2} & 0
\end{array}\right), \quad R=\left(\begin{array}{c c}0 & 0 \\  0&\tfrac{d}{m} \end{array}\right), \quad B=\left (\begin{array}{c} 0 \\-(\tfrac{1}{m})^{1/2} \end{array}\right).
\end{align*}
In particular, we use here $m=d=1$, $k = 1000$ as well as $f_0= 5$, $\omega=3$, the initial value is set to $x_0 = (0,1)^T$, see Figs.~\ref{fig:ref} (left) and \ref{fig:nonauto}.
\end{testcase}

\begin{testcase}[Rigid body dynamics]\label{test2} The rigid body dynamics in three dimensions can be described in terms of the angular momentum $p$ around the principal axes. The Hamiltonian states the kinetic energy, $H(p)=\tfrac{1}{2} \sum_{i=1}^3 \tfrac{p_i^2}{I_i} =\tfrac{1}{2} p^TQp$ with moment of inertia $I_i>0$. In a linearized variant, equipped with friction $r_i\geq 0$, we consider 
\begin{align*}
 \dot p&=(J_Q-R_Q)Q\,p, \quad J_Q=
\left (\begin{array}{c c c} 0 &-1 & 1\\ 1 & 0 & -1 \\ -1 & 1 &0
\end{array}\right), \,\, R_Q=\left(\begin{array}{c c c}r_1 & 0 & 0 \\  0& r_2 & 0 \\ 0 & 0 & r_3 \end{array}\right), 
\,\, Q=\left(\begin{array}{c c c}\tfrac{1}{I_1} & 0 & 0 \\  0& \tfrac{1}{I_2} & 0 \\ 0 & 0 & \tfrac{1}{I_3}\end{array}\right). 
\end{align*}
The transformed version for $x=Q^{1/2}p$ is given by $\dot x=(J-R)x$ with $M=Q^{1/2}M_Q Q^{1/2}$ for $M\in \{J,R\}$. We use here $(r_1,r_2,r_3)=(0,5,1000)$ and $(I_1,I_2,I_3)=(1/4900,1,25)$ as well as the initial value $x(0)=(1,0,0)^T$, cf.\ Fig.~\ref{fig:ref} (right).
\end{testcase}

\begin{figure}[b]
\includegraphics[width=0.495\textwidth]{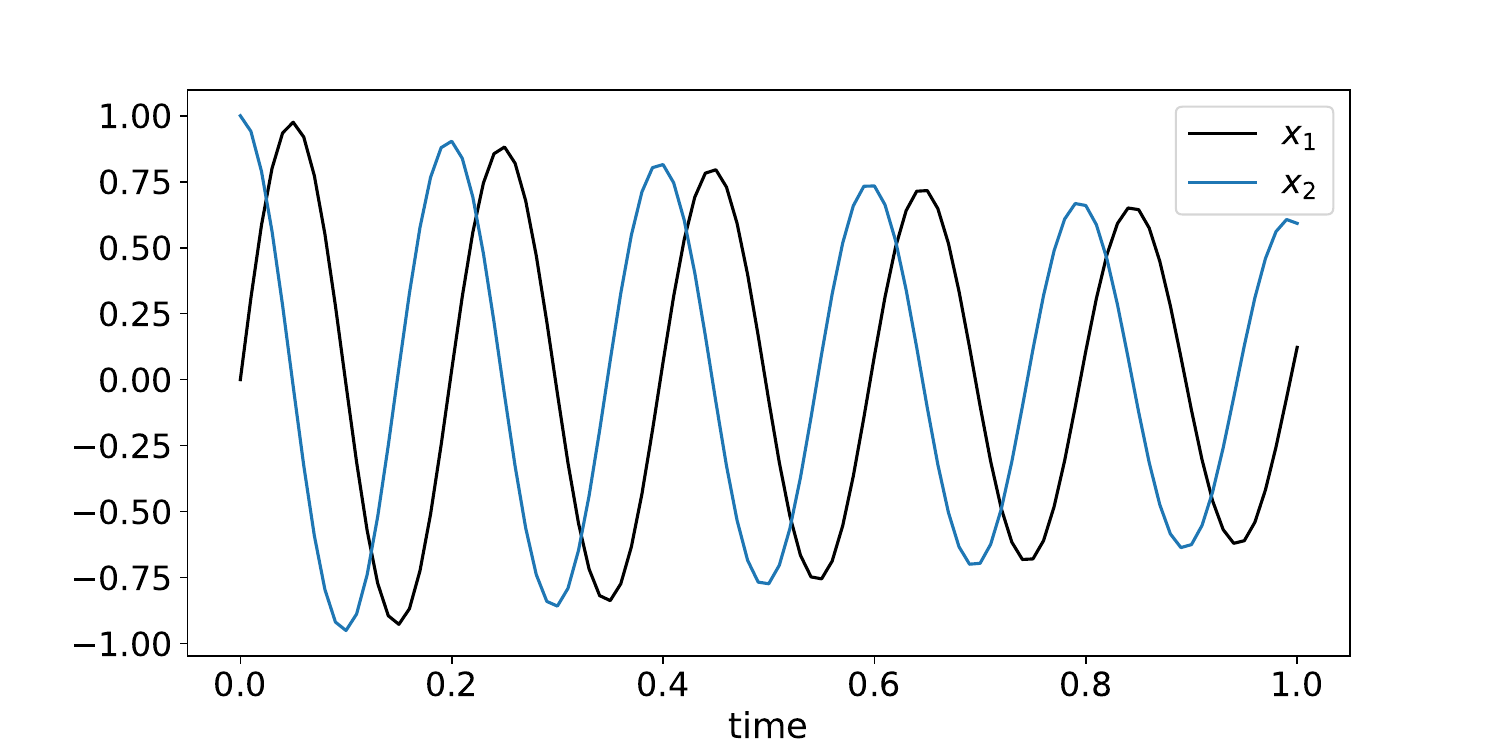}
\hfill
\includegraphics[width=0.495\textwidth]
{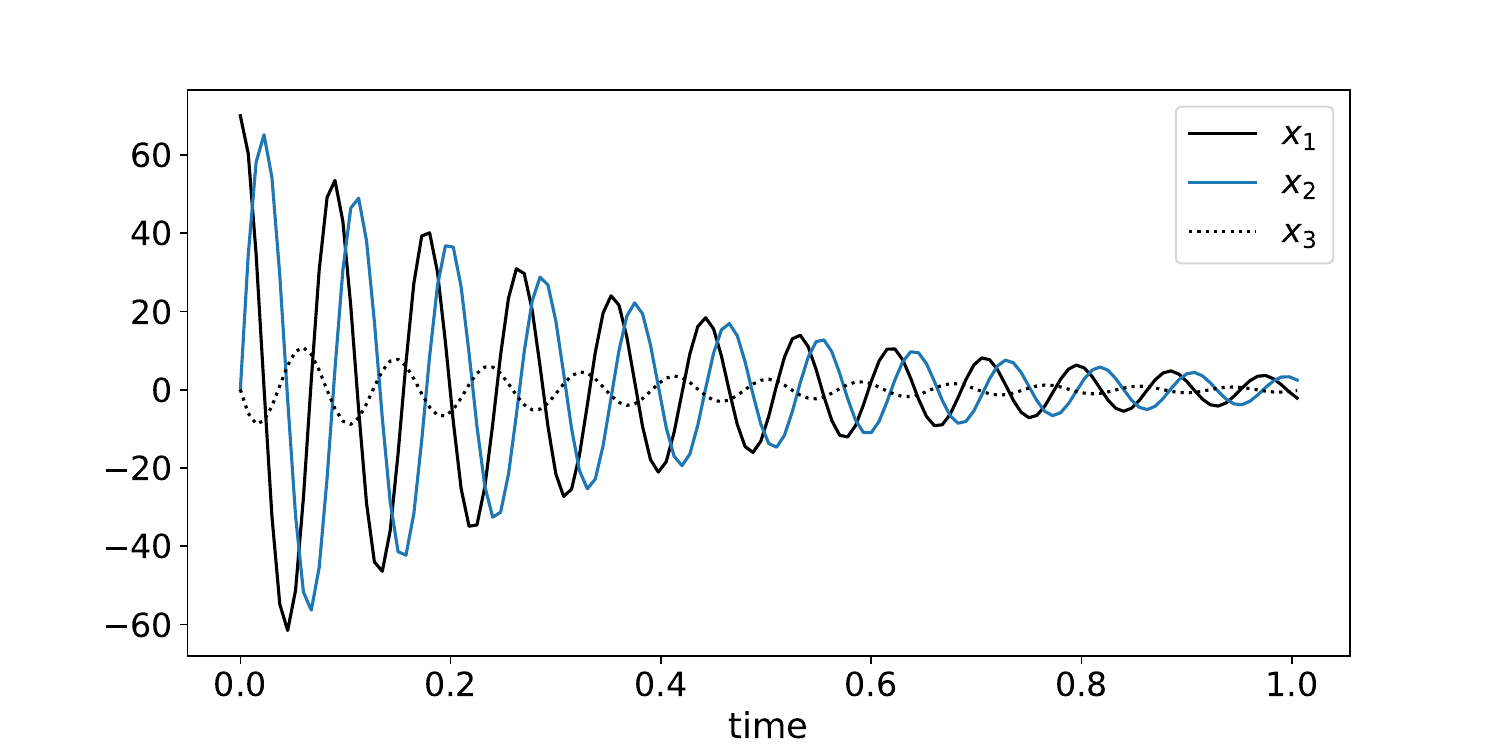} 
 \caption{References of autonomous systems. Left: benchmark test~\ref{test1}(i). Right: test~\ref{test2}.} 
 \label{fig:ref}
\end{figure}

\begin{remark}
The numerical computations are performed in Python. The exponential matrices are computed with the routine scipy.linalg.expm (Pade approximation of variable order based on the array data) and the integrals with scipy.integrate.quad (technique of Fortran library QUADPACK). The accuracy is set to $\mathcal{O}(10^{-16})$.
\end{remark}

\subsection{Performance of Energy-associated Splitting Schemes for Autonomous Systems}

\begin{figure}[tb]
\includegraphics[width=0.495\textwidth]{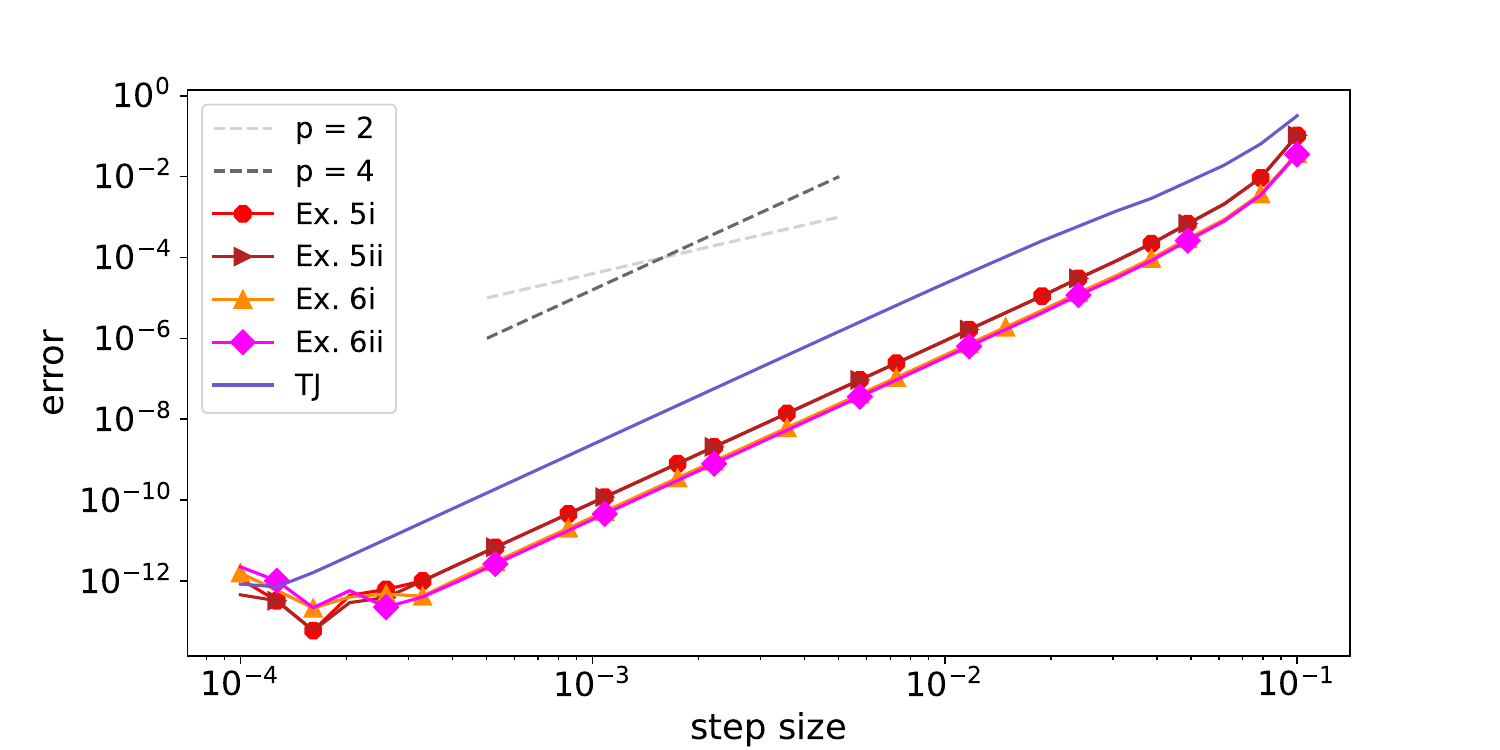}
\hfill
\includegraphics[width=0.495\textwidth]{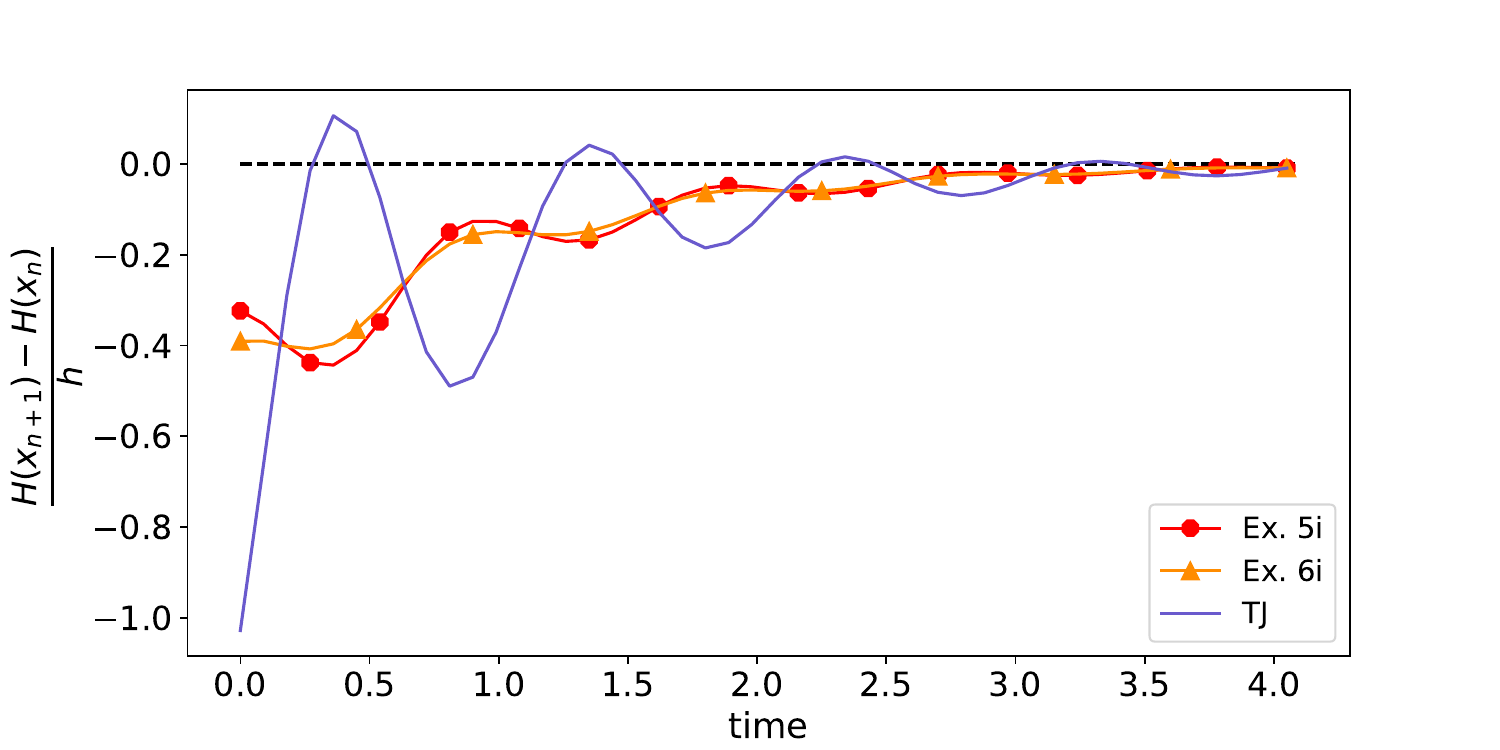}
 \caption{Energy-associated splitting. Convergence (left) and dissipation behavior $\tfrac{1}{h}({H(x_{n+1})-H(x_n)})$ for $h = 0.09$ (right) of some (fourth-order) commutator-based 5 and 7-stage methods from Examples~\ref{ex:4order} and \ref{ex:4order7s} and of the triple-jump method (TJ). Benchmark test~\ref{test1}(i).}
 \label{fig:auto_4order}
\end{figure}

\begin{figure}[tb]
\includegraphics[width=0.495\textwidth]{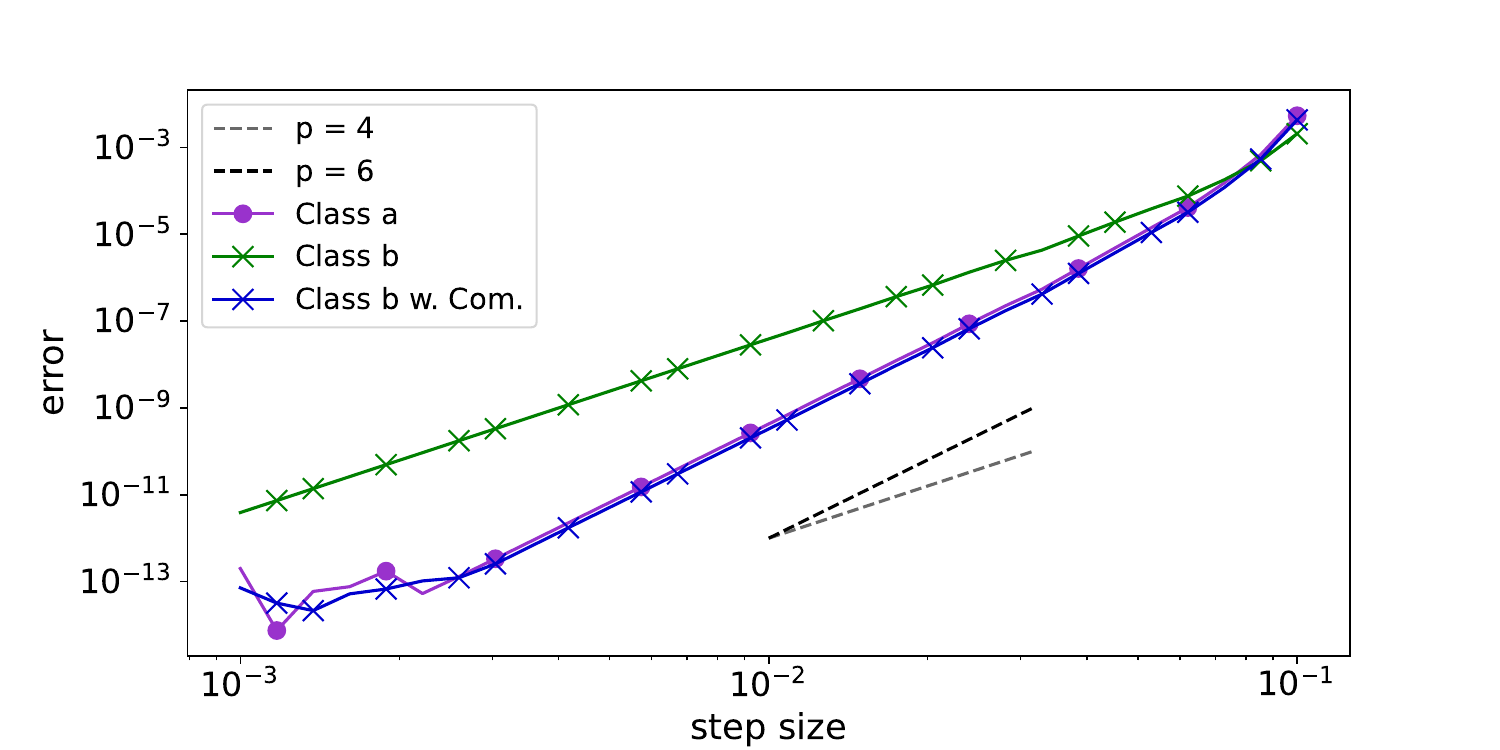}
 \hfill
\includegraphics[width=0.495\textwidth]{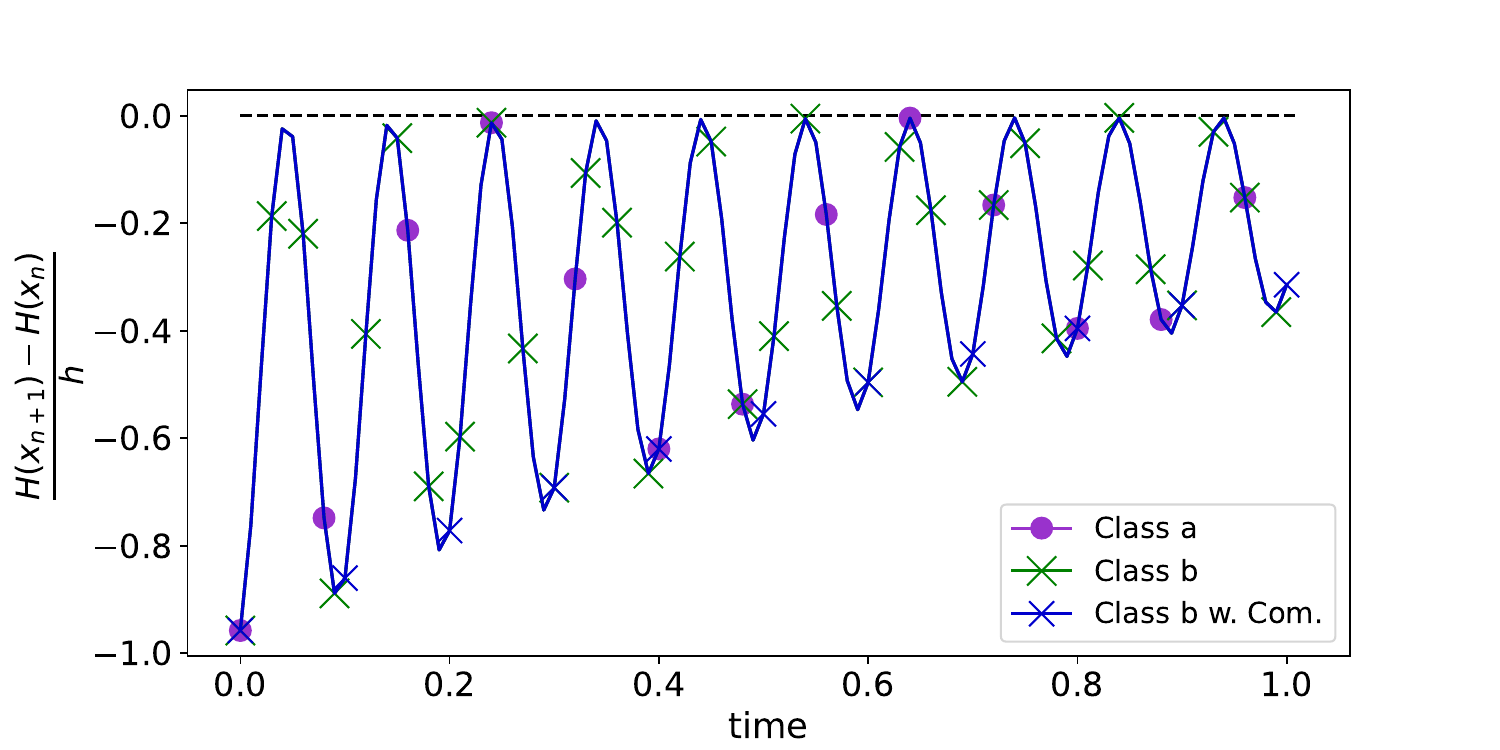}\\
\includegraphics[width=0.495\textwidth]{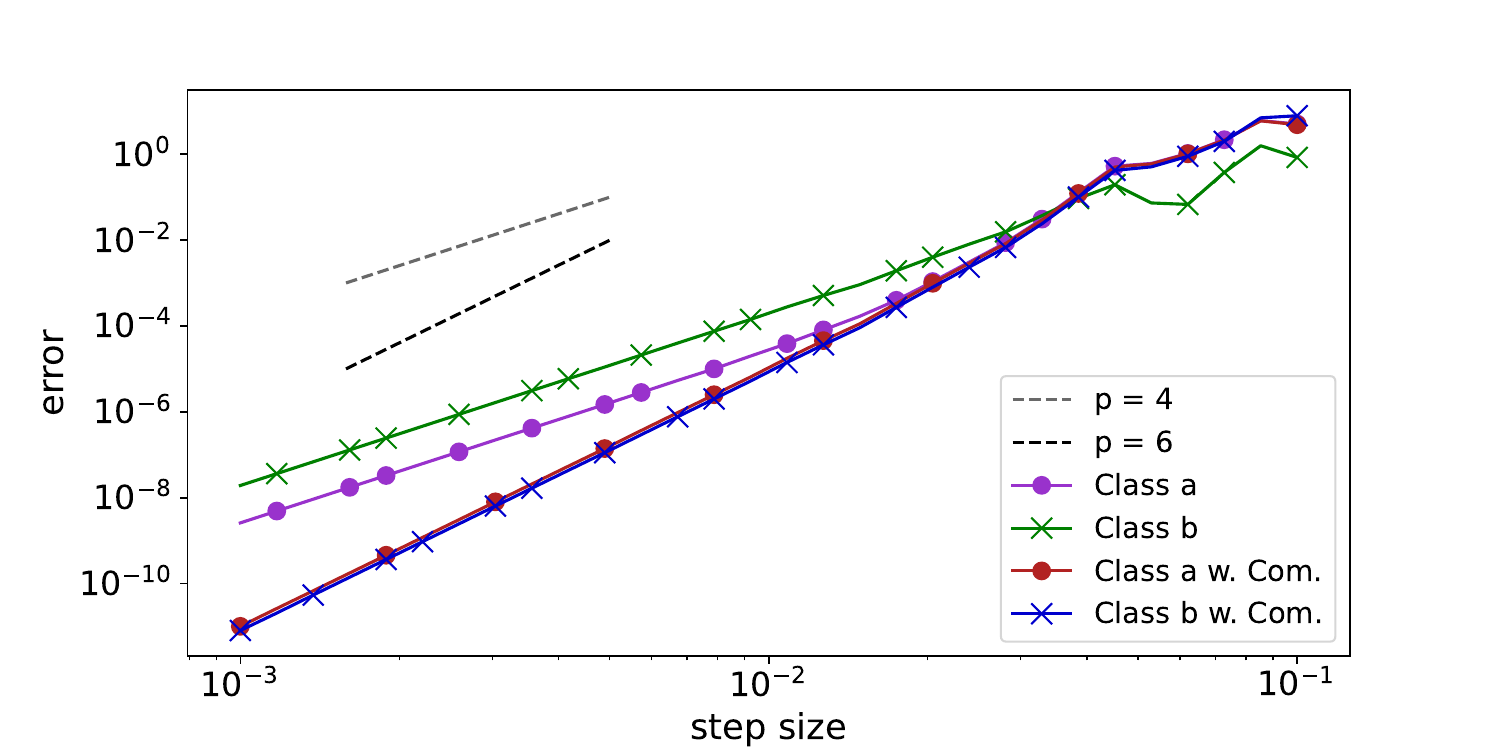}
 \hfill
\includegraphics[width=0.495\textwidth]{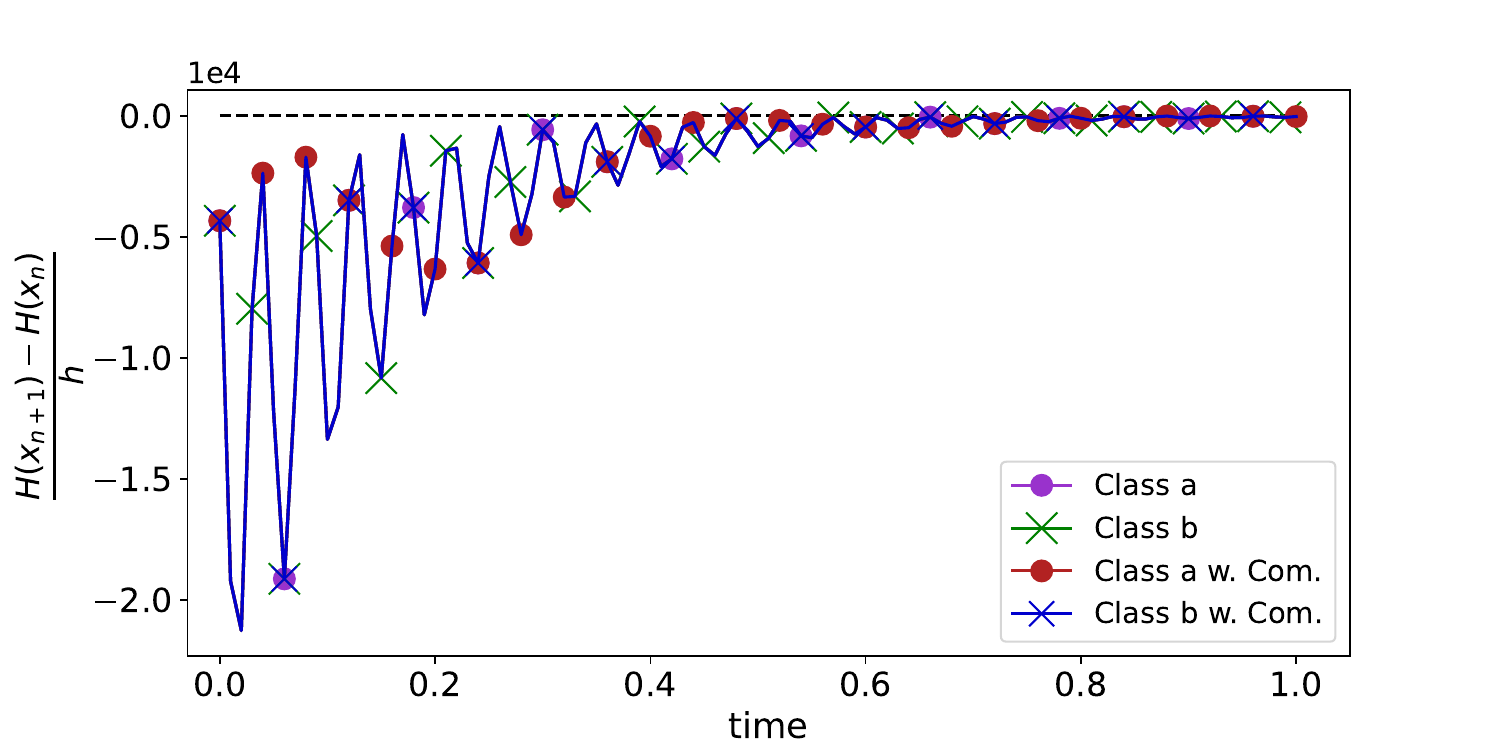}
 \caption{Energy-associated splitting. Convergence (left) and dissipation behavior for $h = 0.01$ (right) of 9-stage class-a and class-b methods from Scheme~\ref{scheme:order6} (original as well as modified with  5th order force gradients (w. Com.), cf.\ Table~\ref{tab:coeff}). Benchmark test~\ref{test1}(i) where $[L^{[1]},D]=0$ (top) and test~\ref{test2} (bottom).}
 \label{fig:auto_6order}
\end{figure}

The numerical performance of our derived fourth-order commutator-based schemes agrees with the theoretical results. The numerical convergence and dissipation behavior of some 5-stage and 7-stage methods taken from Examples \ref{ex:4order} and \ref{ex:4order7s} are visualized for the damped oscillator (test~\ref{test1}) in Figure~\ref{fig:auto_4order}. The results for the benchmark test~\ref{test2} are similar and hence omitted here.
The 5-stage methods show slightly higher error constants than the 7-stage methods, but the effort for one step $h$ is crucially less. Choice of scheme class and placement of the commutator (force gradient) might have impact on the computational effort in case of the flux approximations (numerical discretization of the subproblems), but not on the overall convergence and structure-preservation properties. In comparison, the classical triple-jump method with 7 stages and negative step sizes has a clearly higher error constant than our commutator-based schemes and violates the dissipation inequality in case of moderate $h$, i.e., $\tfrac{1}{h}H(x_{n+1}-H(x_n))\not \leq 0$ at certain time points $t_n$. However, note that in the limit $h\rightarrow 0$ the dissipation inequality is also satisfied here.

The damped oscillator in benchmark test~\ref{test1} is modeled by a special pH-system, where the system matrices imply a vanishing commutator, i.e., $[L^{[1]},D]=0$. In accordance to our theoretical investigations, the 9-stage class-a method of Scheme~\ref{scheme:order6}a), $m=5$, cf. Table~\ref{tab:coeff}, shows here numerically sixth-order convergence and the preservation of the dissipation inequality. The associated 9-stage class-b method acts also dissipation-preserving, but only fourth-order convergent. Adding the non-vanishing fifth-order commutators as forced gradient terms, the modified class-b method becomes sixth-order convergent, but violates the dissipation inequality with an error of $\mathcal{O}(h^5)$, cf.\ Fig.~\ref{fig:auto_6order} (top). 
In case of general pH-systems, the performance of the class-b method (with and without modification) is representative for both classes of Scheme~\ref{scheme:order6} as visualized  for benchmark test~\ref{test2} in Fig.~\ref{fig:auto_6order} (bottom).
The methods of Scheme~\ref{scheme:order6-gen}, in contrast, yield numerically sixth-order convergence and the preservation of the dissipation.

\subsection{Numerical Treatment of Non-autonomous Systems}
For the non-autonomous system of benchmark test \ref{test1}(ii), the numerical convergence and dissipation behavior of the port-based splitting (PBS) and the energy-split--quadrature strategy (ESQ) are illustrated in Fig~\ref{fig:nonauto}. The port-based splitting allows for symmetric methods of fourth, sixth and higher even orders. Figure~\ref{fig:nonauto} shows exemplarily the respective class-a methods of Scheme~\ref{scheme:port}. We observe the fourth and sixth order of convergence as well as the satisfaction of the dissipation inequality. The estimator $d^\mathrm{PBS}$ agrees with the (exact) supplied energy with an approximation error of $\mathcal{O}(h)$ in any order.
The energy-split--quadrature strategy is evaluated on top of an energy-associated decomposition, in particular the 5-stage class-a method from Example~\ref{ex:4order}(ii) is used here. In the numerical simulation we obtain the designed fourth order. Moreover, the approximated energy evolution is bounded by the exact supplied energy as desired. However, note that the derived energy estimator $d^\mathrm{ESQ}$ is not meaningful due to its general positivity, as seen in Fig.~\ref{fig:nonauto} (bottom, right).

\begin{figure}[tb]
\includegraphics[width=0.495\textwidth]
 {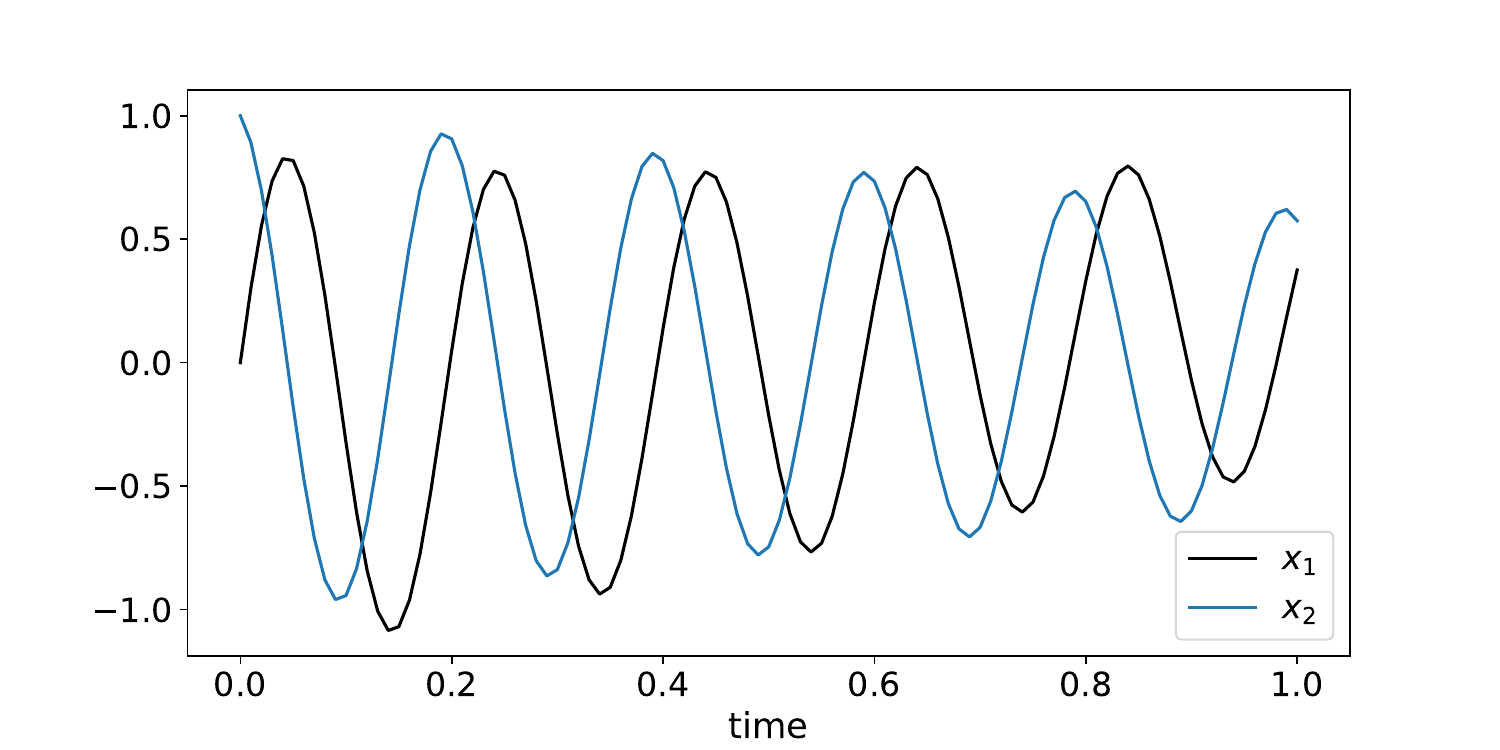}
 \hfill
\includegraphics[width=0.495\textwidth]
{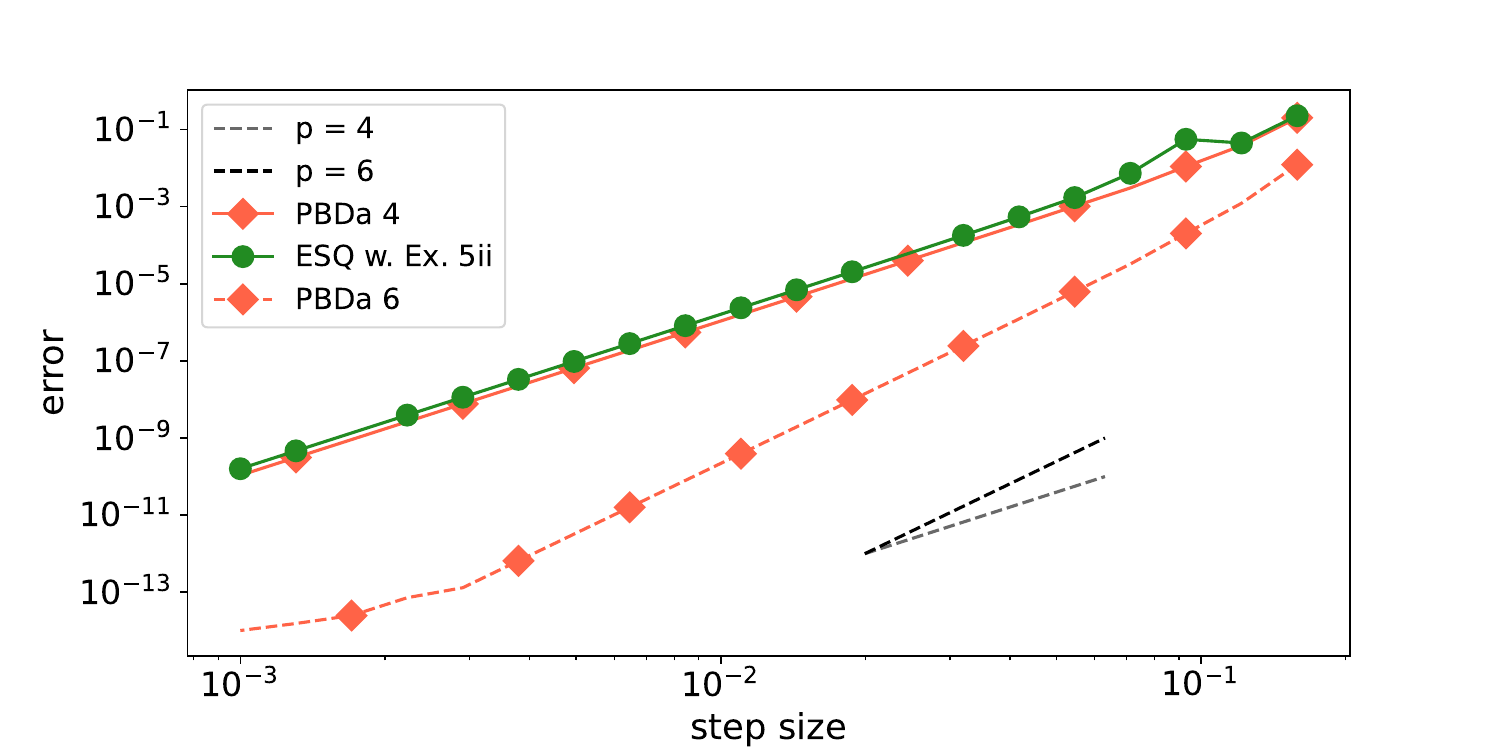}\\
\includegraphics[width=0.495\textwidth]{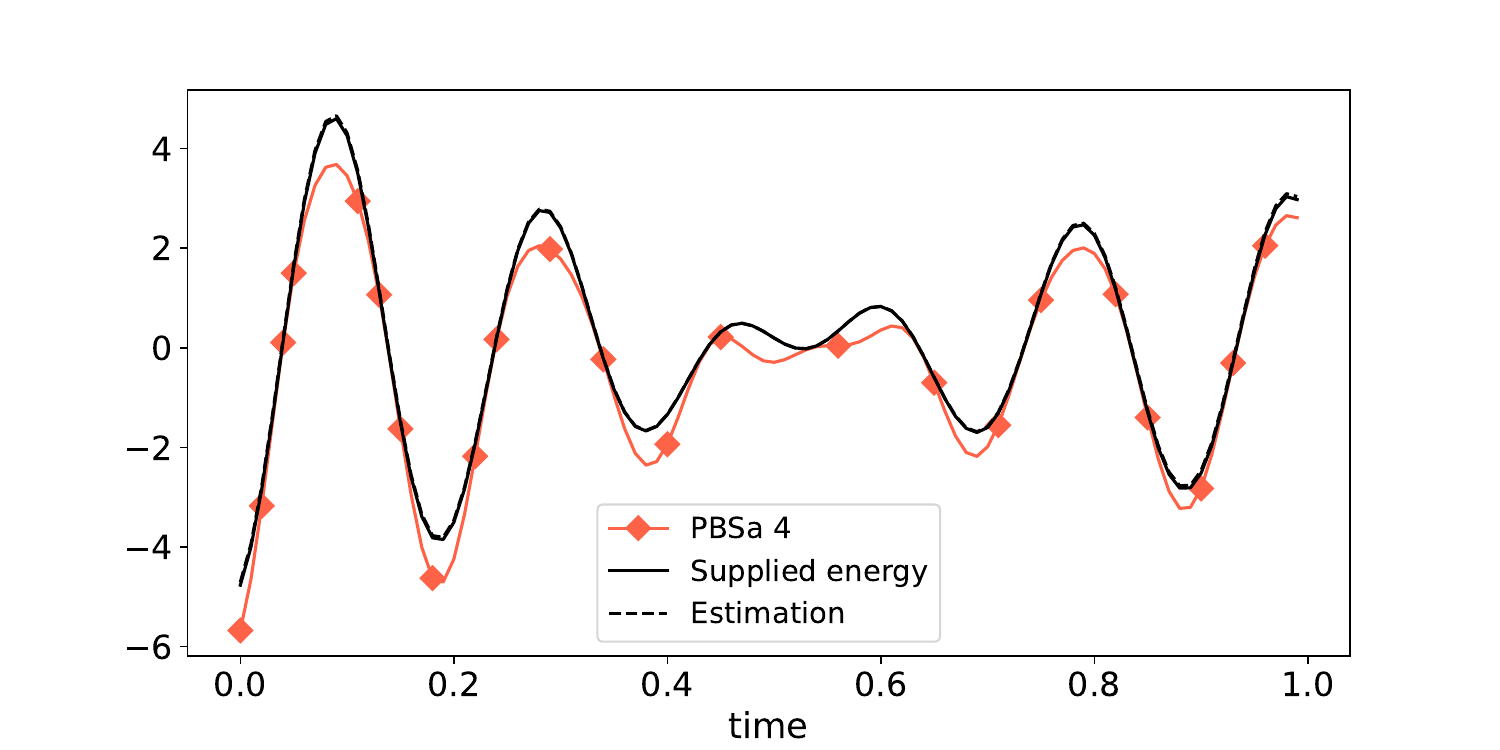}
 \hfill
\includegraphics[width=0.495\textwidth]{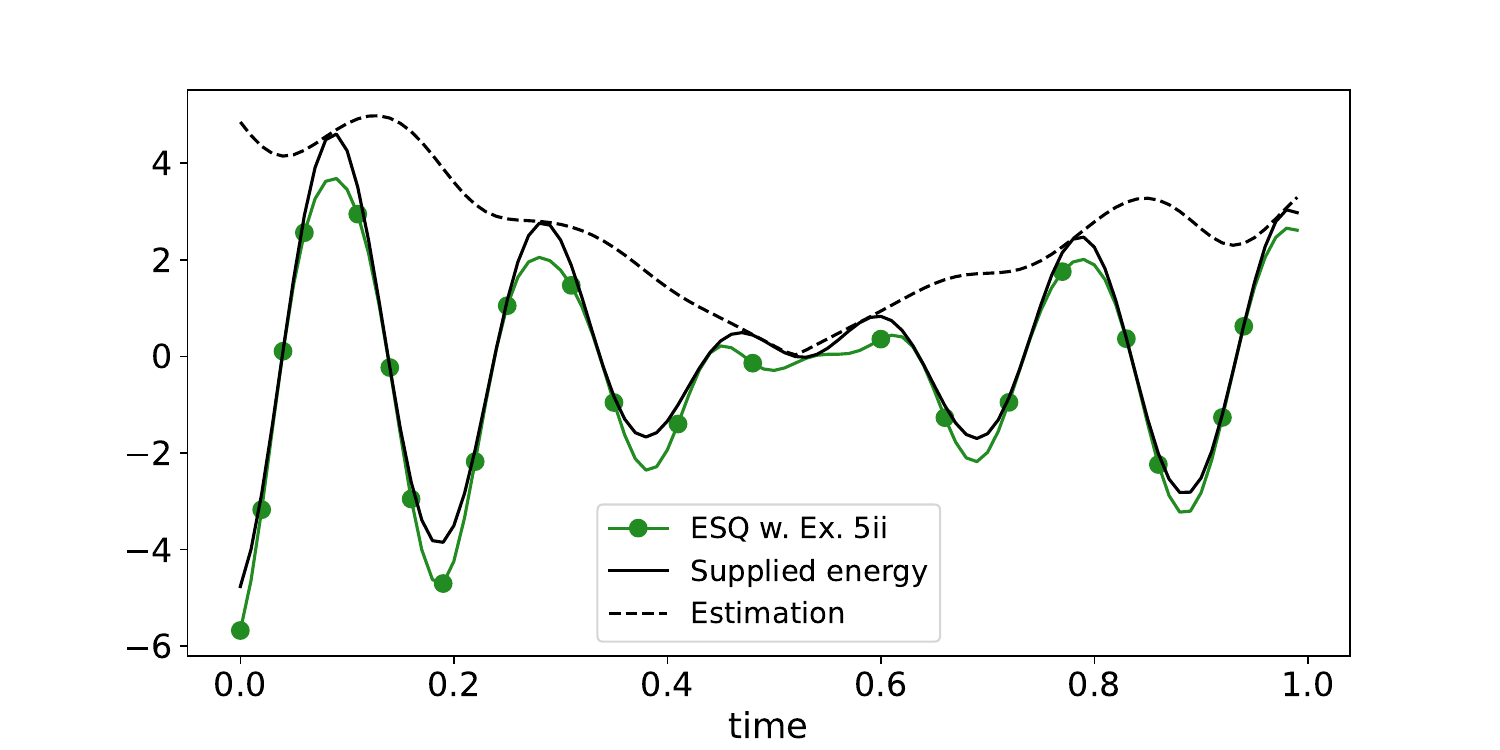}
\caption{Splitting for non-autonomous systems. Top: reference of benchmark test~\ref{test1}(ii) (left); convergence behavior of port-based splitting (PBS) and energy-split--quadrature strategy (ESQ) (right). Bottom: dissipation behavior $\tfrac{1}{h}({H(x_{n+1})-H(x_n)})$ for $h = 0.01$ with (exact) supplied energy $y^T u$ and respective estimator $\tfrac{d}{h}$ for PBS (left) and ESQ (right).}
\label{fig:nonauto}
\end{figure}

\section{Conclusion}
In this paper, we developed and analyzed splitting methods for linear port-Hamiltonian systems, distinguishing between autonomous and non-autonomous settings.

For autonomous systems, we introduced commutator-based methods grounded in an energy-as\-so\-ci\-ated decomposition. Specifically, the pH-system is decomposed into the skew-symmetric (structure) matrix $J$ and the symmetric positive semidefinite (dissipation) matrix $R$. This decomposition yields commutators that possess properties of skew-symmetry and indefinite symmetry. Utilizing these properties, we constructed splitting methods of up to fourth order with strictly positive step sizes. For special system matrices, we set up 9-stage sixth-order methods, and through the use of linear combinations, established generally applicable 10- and 11-stage methods of order six. The skew-symmetry inherent in the commutators ensured the preservation of the dissipation inequality, enabling the development of structure-preserving splitting methods that maintain the physical integrity of the pH-system.

In the non-autonomous case, we observed that commutators generally lose their structural properties across most decompositions. Consequently, we focused on a port-based decomposition, employing a \ldq frozen time\rdq\ approach that allows for the separate consideration of port effects and internal dynamics. The key advantage of this decomposition is that it simplifies the order conditions by ensuring that only a single commutator appears at any given order. This simplification facilitates the construction of higher-order methods, which can achieve arbitrary order without the inclusion of commutators within the splitting scheme. The dissipation inequality in this context provides an accurate approximation of the system’s exact energy, especially for sufficiently small step sizes. Additionally, we discussed the applicability of the autonomous methods in the non-autonomous framework.

Although the developed methods are computationally demanding due to their reliance on exponential mappings, practical implementations can be made more feasible through the use of appropriate flux approximations. Numerical integrators can reduce computational costs while preserving the accuracy and structural integrity of the solutions.
\nocite{ForcegradientPH}
\bibliographystyle{abbrv}
\bibliography{Commutatorbased}

\end{document}